\newtheorem{proposition}{Proposition}
\newtheorem{remark}{Remark}
\journal{Robotics and Autonomous Systems}
\begin{document}

\begin{frontmatter}

\title{Flatness-based nonlinear control strategies for trajectory tracking of quadcopter systems}

\author[LCIS]{Thinh Nguyen}
\ead{ngoc-thinh.nguyen@lcis.grenoble-inp.fr}
\author[LCIS]{Ionela Prodan}
\ead{ionela.prodan@lcis.grenoble-inp.fr}
\author[LCIS]{Laurent Lef\`evre}
\ead{laurent.lefevre@lcis.grenoble-inp.fr}
\address[LCIS]{Univ. Grenoble Alpes, LCIS, F-26902, Valence, France}

\begin{abstract}
This paper proposes several nonlinear control strategies for trajectory tracking of a quadcopter system based on the property of differential flatness. Its originality is twofold. Firstly, it provides a flat output for the quadcopter dynamics capable of creating full flat parametrization of the states and inputs. Moreover, B-splines characterizations of the flat output and their properties allow for optimal trajectory generation subject to way-point constraints. Secondly, several control strategies based on computed torque control and feedback linearization are presented and compared. The advantages of flatness within each control strategy are analyzed and detailed through extensive simulation results.
\end{abstract}

\begin{keyword}
Trajectory tracking; Differential flatness; B-splines parametrization; Feedback linearization; Quadcopter unmanned vehicle
\end{keyword}

\end{frontmatter}
\linenumbers
\section{Introduction}
Recently, there has been an increasing interest in multiple research communities for the Unmanned Aerial Vehicles (UAVs) investigating on kinematics and dynamics, trajectory generation, guidance, navigation and control, especially for quadcopters \cite{dong2016high,ha2014passivity,formentin2011flatness,khan2014quadcopter,sydney2013dynamic,sa2011estimation}. The quadcopters seem to become popular only in the last decades but in fact, their concepts appeared more than a century ago. The first prototype, which was built in 1907 and named Brequet-Richet Gyrolane No.1, is reported to have lifted into flight \cite{leishman2001breguet}. Nowadays, quadcopters are being widely used in different domains and for many purposes such as research platform \cite{dong2016high,chovancova2016comparison,formentin2011flatness,khan2014quadcopter,sydney2013dynamic,sa2011estimation}, military enforcement \cite{markman2000bell}, commercial use \cite{Phantom} as well as being in concept for medical emergency \cite{AmbulanceDrone}. 

For the research area, quadcopters are challenging vehicles to control as they are not only strongly nonlinear and underactuated but also subject to many operating constraints. One feasible approach is to generate off-line a reference path that allows tracking of specific objectives (i.e., passing through a priori given way-points, consumption minimization, state/input constraints satisfaction). Then, develop an effective tracking mechanism for the quadcopter to follow the reference at run-time \cite{iprodanCEP}. As a result, generating a trajectory which respects the internal dynamics of the system and various external constraints, becomes part of the problem.

A popular solution for trajectory generation is the use of flat output characterizations \cite{fliess1995flatness}. These allow implicitly to validate the dynamics and may (with some difficulty) take into account constraints. There is a number of works like \cite{formentin2011flatness,sreenath2013geometric,chamseddine2012flatness,sreenath2013dynamics} which employ differential flatness within the trajectory tracking control design. However, these approaches are lacking in several essential directions:
\begin{itemize}
\item simplified dynamics (usually the yaw angle and/or the thrust are kept constant) are used to generate the trajectory and hence tracking errors may ensue;
\item part of the available information provided by the trajectory is discarded at runtime (e.g., only position information is taken into account).
\end{itemize}

From the control point of view, there are various quadcopter control strategies like Lyapunov-based control \cite{bouabdallah2004design}, classical PID control \cite{argentim2013pid,kodgirwar2014design}, LQR (Linear-quadratic regulator) control \cite{chovancova2016comparison,argentim2013pid}, feedback linearization \cite{formentin2011flatness} or optimization-based control \cite{chamseddine2012flatness}. Each of these approaches has some significant shortcomings:
\begin{itemize}
\item the control mechanism considers only altitude and attitude components and discards the rest of the state components \cite{bouabdallah2004design}. As a remark, Lyapunov function and corresponding stabilization controller \cite{tzafestas2013introduction,bouabdallah2004design} may be difficult to find in other specific cases (e.g., controlling the position and direction angle of the quadcopter system); 
\item PID or LQR controllers (which are designed for a certain linearized model) are used to close the loop for the strongly non-linear dynamics of the quadcopter; this limits the performances of the scheme and requires for stay around the equilibrium point along which the linearization has been done \cite{argentim2013pid,kodgirwar2014design};
\item even when nonlinear dynamics are taken into account, simplifications and approximations are made (e.g., constant yaw angle \cite{rivera2010flatness,formentin2011flatness}, small angles \cite{sydney2013dynamic}, constant velocity \cite{iprodanCEP}).
\end{itemize}

These simplifications for both trajectory generation and tracking mechanisms are apologized by the inherent complexity of the quadcopter dynamics but they raise two questions: \emph{how can we make use of the full information provided by flatness? and, is it possible to control the quadcopter system considering its full behavior?}.  

To overcome the difficulties in processing the nonlinearities of a quadcopter system and all the above mentioned shortcomings, we propose in the rest of the paper several contributions which, to the best of our knowledge, are new to the state of the art:
\begin{itemize}
\item construct a flat trajectory which provides positions, angles, thrust and torques, considering the nonlinear quadcopter dynamics (throughout the paper we use B-splines characterizations of the flat output and their properties, which allow for optimal trajectory generation subject to way-point constraints \cite{obstacle2015MED});
\item delve into several control strategies based on the concept of feedback linearization which can control both orientation and position of the quadcopter system without any assumptions or simplifications on the system (as the assumptions of the nullified yaw angle \cite{rivera2010flatness,formentin2011flatness} or small angles \cite{sydney2013dynamic}).
\end{itemize}

The remaining paper is organized as follows. Section \ref{sec:model} presents and in-depth view of the  kinematics and dynamics modeling of a quadcopter system. Section \ref{sec:flat} presents the flatness-based quadcopter characterization which fully takes into account the system dynamics. Section \ref{sec:control} details some effective constructions for the rotation and attitude controllers of a quadcopter system based on feedback linearization. These constructions are further used to develop trajectory tracking control strategies making fully use of the information provided by flatness. Extensive simulation results and comparisons between the proposed control strategies are provided in Section \ref{sec:sim} over a Crazyflie quadcopter system. Section \ref{sec:concl} presents the conclusions and future work.
\section{Quadcopter modelling}
\label{sec:model}
This section introduces the kinematics and the associated dynamics of the quadcopter using Newton-Euler formalism (more information can be found in \cite{formentin2011flatness,khan2014quadcopter}). The quadcopter will operate in two different coordinate systems: the \emph{body reference frame} (BF) which is attached to the mass center of the quadcopter and the \emph{inertial reference frame} (IF) which is fixed to the ground (East-North-Up coordinates). Upper-scripts $B$ and $I$ will be used to denote a variable measured in the BF and in the IF, respectively. 
\subsection{Kinematics}
The angular position (or attitude) of the quadcopter is defined by the orientation of the BF with respect to the IF. In general, this relation is described through a 3D rotation matrix which is the product of the sequence of three successive rotations. For the quadcopter we apply the roll--pitch--yaw XYZ $(\phi, \theta, \psi)$ sequence whose rotation matrix is \footnote{Note that, in order to write in a more compact way we have used in this paper $'s'$, $'c'$ and $'t'$ to denote the $\sin(\cdot)$, $\cos(\cdot)$ and $\tan(\cdot)$ functions, respectively.}(similar results can be found in \cite{formentin2011flatness,sydney2013dynamic}): 
\begin{equation}
\label{eq:roma}
_{B}^{I}R=R_Z (\psi)R_Y (\theta)R_X (\phi)=\begin{bmatrix} c\theta c\psi & s\phi s\theta c\psi-c\phi s\psi & c\phi s\theta c\psi+s\phi s\psi \\ c\theta s\psi & s\phi s\theta s\psi + c\phi c\psi & c\phi s\theta s\psi-s\phi c\psi \\ -s\theta & s\phi c\theta & c\phi c\theta \end{bmatrix},
\end{equation}
The quadcopter has the angular velocity vector $\overrightarrow{\omega}$ pointing along the axis of rotation. We use the right hand rule to determine the direction of the rotation corresponding to the one of the angular velocity vector. Therefore, the angular velocity vector $\overrightarrow{\omega}$ looked from the BF $^{B}\overrightarrow{\omega} \triangleq \begin{bmatrix} \omega_x \ \omega_y \ \omega_z \end{bmatrix}^\top$ \footnote{Note that the angular velocity $^{B}\overrightarrow{\omega}$ is physically measured by the gyroscope.} can be expressed in term of the attitude as (the inverse relation can be found in \cite{formentin2011flatness,sydney2013dynamic}):
\begin{equation}
\label{eq:omega}
^{B}\overrightarrow{\omega}=\begin{bmatrix} 1 & 0 & -s\theta \\ 0 & c\phi & s\phi c\theta \\ 0 & -s\phi & c\phi c\theta \end{bmatrix} \begin{bmatrix} \dot{\phi} \\ \dot{\theta} \\ \dot{\psi} \end{bmatrix}=W \dot{\eta}, 
\end{equation}
where $\eta \triangleq \begin{bmatrix} \phi \ \theta \ \psi \end{bmatrix}^\top$.
\subsection{Dynamics}
The quadcopter structure and the BF are presented in Figure \ref{fig:Quad} including the corresponding angular velocities  $\omega_i$, torques $\tau_i$ and forces $f_i$ created by the four rotors, with $i=1,\cdots,4$.
\begin{figure}
\begin{center}
\includegraphics[width=0.5\columnwidth]{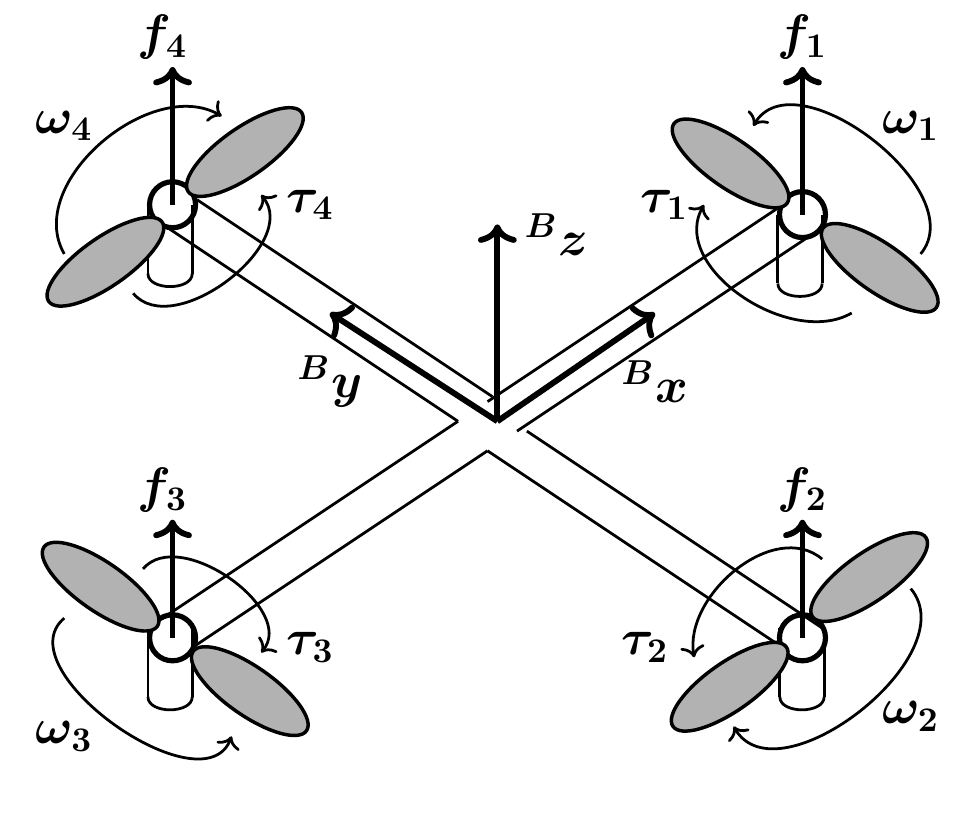}
\caption{Quadcopter system.}
\label{fig:Quad}
\end{center}
\end{figure}

From the aerodynamic effects viewpoint, we can express the torque $\tau_{i}$ about the $z_B$ axis\footnote{$(-1)^i b\omega_i^2$ term is positive if the $i^{th}$ propeller is spinning clockwise and negative if it is spinning counterclockwise} and the forces $f_i$ along $z_B$ direction for the $i^{th}$ rotor as:
\begin{align}
&\tau_{i}=(-1)^i b\omega_i^2 + I_M \dot{\omega_i} \approx (-1)^i b\omega_i^2, \\
&f_i=K_T \omega_i^2,
\end{align}
where $i=1,\cdots,4$, $I_M$ is the moment of inertia of the motor about the rotational axis, $b$ and $K_T$ are assumed known aerodynamic constants.\\
Furthermore, the total thrust force and torques acting on the quadcopter have the magnitudes as:
\begin{align}
\label{thrust}
&T=\sum_{i=1}^{4} f_i=K_T\sum_{i=1}^{4}\omega_i^2, \\
\label{tphi}
&\tau_{\phi}=Lf_4-Lf_2=LK_T\left(\omega_4^2 - \omega_2^2 \right), \\
\label{ttheta}
&\tau_{\theta}=Lf_3-Lf_1=LK_T\left(\omega_3^2  - \omega_1^2 \right), \\
\label{tpsi}
&\tau_{\psi}=\sum_{i=1}^{4} \tau_i=b\left( -\omega_1^2 + \omega_2^2 -\omega_3^2 +\omega_4^2 \right),
\end{align}
where $L$ is the distance from the center of the quadcopter to any propellers.
Note that, expressing in BF, the thrust force is defined as $\overrightarrow{^B T} \triangleq \begin{bmatrix} 0 \ 0 \ T \end{bmatrix}^\top$ and $\overrightarrow{\tau_\phi},\overrightarrow{\tau_\theta},\overrightarrow{\tau_\psi}$ have corresponding directions along the three axes of BF.
\subsubsection{Translation equation}
\hfill \newline
In the IF, assuming that the centrifugal force is nullified, hence, only gravitational force, $m \overrightarrow{g}$, thrust force, $\overrightarrow{^B T}$ and external perturbation force (most commonly, friction), $\overrightarrow{F_D}$ are contributing to the acceleration of the quadcopter:
\begin{equation}
\label{eq:tNewton}
m \ddot{\xi}=m \overrightarrow{g} +(_{B}^{I}R) \overrightarrow{^B T}+\overrightarrow{F_D},
\end{equation}
where $\xi \triangleq \begin{bmatrix} x \ y \ z \end{bmatrix}^\top$ represents the quadcopter position, the thrust force $\overrightarrow{^B T}$ has the magnitude defined in equation \eqref{thrust} and the perturbation force $\overrightarrow{F_D}$ will be detailed later in Section \ref{friction}.
\subsubsection{Rotation equation}
\hfill \newline
While it looks convenient to have the linear equations of motion in the IF, the rotational equations are more useful in the BF. We assume that the quadcopter has a symmetrical construction, hence, the inertial tensor $^BI$ is a diagonal matrix:
\begin{equation}
\label{eq:tensor}
^BI=diag \lbrace I_{xx}, I_{yy},I_{zz} \rbrace.
\end{equation}
In vector form, the Newton-Euler rotational equation for the quadcopter in BF taking into account the gyroscopic force is defined as:
\begin{equation}
\label{eq:rNewton}
^BI ^B\dot{\overrightarrow{\omega}}+^B\overrightarrow{\omega}\times(^BI ^B\overrightarrow{\omega})=\tau_\eta,
\end{equation}
where `$\times$' denotes the cross-product of two vectors and $\tau_\eta\triangleq \begin{bmatrix} \tau_{\phi} \  \tau_{\theta} \  \tau_{\psi} \end{bmatrix}^\top$ gathers the roll, pitch and yaw torques which have been detailed in equations \eqref{tphi}--\eqref{tpsi}.
\subsubsection{Perturbation force}
\label{friction}
\hfill \newline
In order to make the model more realistic and able to take into account air disturbances, we model the external perturbation force triggered by the quadcopter motion and the external wind. Based on the definition of friction force found in \cite{fox1994introduction}, the vector of global friction force is given by:
\begin{equation}
\label{perturbation}
\overrightarrow{F_D}=\frac{1}{2} C_D \rho |\overrightarrow{V_r}| A \overrightarrow{V_r},
\end{equation}
where $\rho$ is the surrounding fluid density, $C_D$ is the drag coefficient, $\overrightarrow{V_r}=\overrightarrow{w}-\dot{\xi}$ is the vector of relative motion between the wind speed $\overrightarrow{w}$ and the quadcopter velocity $\dot{\xi}$. In equation \eqref{perturbation}, the projected area $A$ is calculated by the following relation:
\begin{equation}
\label{projected}
A=A_x \begin{vmatrix} \frac{\overrightarrow{^I x_B}\overrightarrow{V_r}}{|\overrightarrow{V_r}|} \end{vmatrix} + A_y \begin{vmatrix} \frac{\overrightarrow{^I y_B}\overrightarrow{V_r}}{|\overrightarrow{V_r}|} \end{vmatrix} + A_z \begin{vmatrix} \frac{\overrightarrow{^I z_B}\overrightarrow{V_r}}{|\overrightarrow{V_r}|} \end{vmatrix},
\end{equation}
where $A_x$, $A_y$, $A_z$, which depend on the designed structure of the quadcopter, describe the projected areas into YZ, XZ, and XY planes of the BF. In equation \eqref{projected}, $\overrightarrow{^I x_B}$, $\overrightarrow{^I y_B}$ and $\overrightarrow{^I z_B}$ represent the three column vectors of the rotation matrix $^I _B R$ given in \eqref{eq:roma}.
\section{Flat characterizations}
\label{sec:flat}
This section introduces first some basic definitions and notions on differential flatness and B-splines parametrization \cite{obstacle2015MED,iprodanCEP}. Next, a novel flatness-based characterization which fully takes into account the dynamics of the quadcopter system is described. 
\subsection{Basic definitions}
\label{subsec:def}

Differential flatness represents a generalization to nonlinear systems of the structural properties of the linear systems, which exhibit a state representation obtained via derivatives of the input and output signals.

Consider a general system: 
\begin{equation}
\label{eq:QuadG}
\dot{\bold{x}}(t)=f(\bold{x}(t),\bold{u}(t)),
\end{equation}
where $x(t) \in \mathbb{R}^{n}$ is the state vector and $u(t) \in \mathbb{R}^{m}$ is the input vector. The nonlinear system written in general form as in equation \ref{eq:QuadG} is called differentially flat if there exists a flat output $\bold{z}(t) \in \mathbb{R}^{m}$:
\begin{equation}
\label{eq:flat_out}
\bold{z}(t)=\Upsilon(\bold{x}(t),\bold{u}(t),\dot{\bold{u}}(t),\cdots, \bold{u}^{(q)}(t)),
\end{equation}
such that the states and inputs can be algebraically expressed in terms of $\mathbf z(t)$ and a finite number of its higher-order derivatives:
\begin{subequations}
\label{eq:diff_a}
\begin{align}
\label{eq:diff_a_1}\bold{x}(t)&=\Upsilon_1(\bold{z}(t),\dot{\bold{z}}(t), \cdots ,\bold{z}^{(q)}(t)),\\
\label{eq:diff_a_2}\bold{u}(t)&=\Upsilon_2(\bold{z}(t),\dot{\bold{z}}(t),\cdots ,\bold{z}^{(q+1)}(t)).
\end{align}
\end{subequations}

\begin{remark} []
\label{rem:flatInput}
Note that the flatness and controlability properties of a system are directly related. It is demonstrated that a linear system is flat if and only if it is controllable \cite{levine2011necessary}\cite{fliess1995flatness}. Furthermore, for any system admitting a flatness-based representation, the number of flat outputs equals the number of inputs. \ensuremath{\hfill\square}
\end{remark}
An essential aspect of construction \eqref{eq:flat_out}--\eqref{eq:diff_a} is that it reduces the problem of trajectory generation to finding an adequate flat output \eqref{eq:flat_out}. This means choosing $\mathbf z(t)$ such that, via mappings $\Upsilon_1,\Upsilon_2$, various constraints on states and inputs \eqref{eq:diff_a} are verified. Since the flat output is not straightforward to compute under these restrictions, usually a projection across a finite basis of smooth functions $\Lambda^i(t)$ is considered:
\begin{equation}
\label{eq:paramFlat}
\bold{z}(t)=\sum\limits_{i=1}^{n}{\alpha_i \Lambda^i(t)}, \ \  \alpha_i \in \mathbb{R}.
\end{equation}
Parameter $n$ in equation \eqref{eq:paramFlat} depends on the number of constraints imposed onto the system \cite{wilkinson}.

There are multiple choices for the basis functions $\Lambda^i(t)$. Among these, \emph{B-spline} basis functions are well-suited to flatness parametrization due to their ease of enforcing continuity and because their degree depends only up to which derivative is needed to ensure continuity \cite{suryawan2012constrained,de2009flatness}.

A B-spline of order $d$ is characterized by a \emph{knot-vector} \cite{piegl1995b}: 
\begin{equation}
\label{eq:knot_vector}
\mathbb T=\left\{\tau_0,\tau_1\dots  \tau_m\right\},
\end{equation}
of non-decreasing time instants ($\tau_0\leq \tau_1\leq\dots \leq \tau_m$) which parametrizes the associated basis functions $B_{i,d}(t)$:
\begin{subequations}
\begin{align}
\label{eq:bsplines_0}
B_{i,1}(t)&=\begin{cases}1, \textrm{ for }\tau_i\leq t<\tau_{i+1}\\0\textrm{ otherwise}\end{cases},\\
\label{eq:bsplines_d}
B_{i,d}(t)&=\frac{t-\tau_i}{\tau_{i+d-1}-\tau_i}B_{i,d-1}(t)+\frac{\tau_{i+d}-t}{\tau_{i+d}-\tau_{i+1}}B_{i+1,d-1}(t),
\end{align}
\end{subequations}
for $d>1$ and $i=0,1\dots n=m-d$.
Considering a collection of \emph{control points} 
\begin{equation}
\label{eq:control_points}
\mathbb P=\left\{p_0,p_1\dots  p_n\right\},
\end{equation}
we define a \emph{B-spline curve} as a linear combination of the control points \eqref{eq:control_points} and the B-spline functions \eqref{eq:bsplines_0}--\eqref{eq:bsplines_d}:
\begin{equation}
\label{eq:bspline_curve}
\bold{z}(t)=\sum\limits_{i=0}^n B_{i,d}(t)p_i=\mathbf P\mathbf B_d(t),
\end{equation}
where $\mathbf P=\begin{bmatrix} p_0\dots  p_n\end{bmatrix}$ and $\mathbf B_d(t)=\begin{bmatrix} B_{0,d}(t)\dots  B_{n,d}(t)\end{bmatrix}^\top$.
This construction yields several interesting properties which are enumerated in \cite{obstacle2015MED}.

Let us consider now a collection of $N+1$ way-points and the time stamps associated to them:
\begin{equation}
\label{eq:wt}
\mathbb W=\{w_k\}\textrm{ and } \mathbb T_{\mathbb W}=\{t_k\},
\end{equation}
for any $k=0\dots N$. The goal is to construct a flat trajectory which passes through each way-point $w_k$ at the time instant $t_k$, i.e., find a flat output $\bold{z}(t)$ such that 
\begin{equation}
\label{eq:xconstr}
\bold{x}(t_k)=\Upsilon_1(\bold{z}(t_k),\dots \bold{z}^{(q)}(t_k))=w_k, \: \forall k=0\dots N.
\end{equation}

Within the B-spline framework \eqref{eq:bspline_curve} we provide a vector of control points \eqref{eq:control_points} and its associated knot-vector \eqref{eq:knot_vector} such that condition \eqref{eq:xconstr} is verified:
\begin{equation}
\label{eq:flatconstraints}
\Upsilon_1(\mathbf B_d(t_k),\mathbf P)=w_k,\: \forall k=0\dots N.
\end{equation}

Let us assume that the knot-vector is fixed ($\tau_0=t_0$, $\tau_{n+d}=t_N$ and the intermediary points $\tau_{d},\dots, \tau_{n}$ are equally distributed along these extremes). Then, we can write an optimization problem with control points $p_i$ as decision variables\footnote{Since the B-spline curve is clamped (see for more details \cite{obstacle2015MED}) it means that the extreme control points are already fixed: $\Upsilon_1(p_0=\mathbf z(t_0))=w_0$ and $\Upsilon_1(p_n=\mathbf z(t_N))=w_N$.} whose goal is to minimize a state and/or input integral cost $\Xi(\bold{x}(t),\bold{u}(t))$ along the time interval $[t_0,t_N]$:
\begin{equation}
\label{eq:flatcost}
\begin{split}
\mathbf P=&\arg\min\limits_{\mathbf P} \int_{t_0}^{t_N}||\tilde \Xi(\mathbf B_d(t),\mathbf P)||_Qdt,\\
&\textrm{s.t. constraints \eqref{eq:flatconstraints} are verified},
\end{split}
\end{equation}
with $Q$ a positive symmetric matrix. The cost in \eqref{eq:flatcost} can impose any penalization we deem necessary (length of the trajectory, input variation/magnitude, energy minimization and the like). 
\subsection{Flatness-based system representation}
\label{subsec:flatReprez}
By replacing the rotation matrix \eqref{eq:roma} in the translation equation $\eqref{eq:tNewton}$ and disregarding the perturbation force as well as replacing the inertia tensor \eqref{eq:tensor} in the rotation equation \eqref{eq:rNewton}, we obtain the matrix form of the quadcopter dynamics:
\begin{align}
\label{eq:tmatrix}
\begin{bmatrix} \ddot{x} \\ \ddot{y} \\ \ddot{z} \end{bmatrix}&= \begin{bmatrix} 0\\0\\-g \end{bmatrix} + \frac{1}{m}\begin{bmatrix} c\phi s\theta c\psi + s\phi s\psi \\ c\phi s\theta s\psi -s\phi c\psi\\ c\phi c\theta \end{bmatrix}T, \\
\label{eq:rmatrix}
\begin{bmatrix} \dot{\omega_x}\\ \dot{\omega_y}\\ \dot{\omega_z} \end{bmatrix}&=\begin{bmatrix} (I_{yy}-I_{zz})I_{xx}^{-1}\omega_y \omega_z \\ (I_{zz}-I_{xx})I_{yy}^{-1}\omega_z \omega_x\\ (I_{xx}-I_{yy})I_{zz}^{-1}\omega_x \omega_y \end{bmatrix} + \begin{bmatrix} I_{xx}^{-1} \tau_\phi \\I_{yy}^{-1}\tau_\theta\\ I_{zz}^{-1} \tau_\psi\end{bmatrix}.
\end{align}
Considering the nonlinear dynamics \eqref{eq:tmatrix} we derive the following flat output vector $\mathbf z \in \mathbb{R}^{4}$ whose dimension equals to the number of inputs $\begin{bmatrix} T \ \tau_\phi \ \tau_\theta \ \tau_\psi \end{bmatrix}^\top $:
\begin{equation}
\label{eq:flatOuptus}
\mathbf z=\begin{bmatrix} z_1 \ z_2 \ z_3 \ z_4 \end{bmatrix}^\top =\begin{bmatrix} x \ y \ z \ \tan\left(\frac{\psi}{2}\right) \end{bmatrix}^\top,
\end{equation}
which will be used to describe the remaining states and inputs (roll, pitch, yaw, thrust and the like):
\begin{align}
\label{eq:phiflat}
&\phi=\arcsin\left(\frac{2z_4\ddot{z_1}-(1-z_4^2)\ddot{z_2}}{(1+z_4^2)\sqrt{\ddot{z_1}^2+\ddot{z_2}^2+(\ddot{z_3}+g)^2}}\right), \\
\label{eq:thetaflat}
&\theta=\arctan \left( \frac{(1-z_4^2)\ddot{z_1}+2z_4\ddot{z_2}}{(1+z_4^2)(\ddot{z_3}+g)} \right), \\
\label{eq:psiflat}
&\psi=2 \arctan (z_4),\\
\label{eq:thrustflat}
&T=m\sqrt{\ddot{z_1}^2+\ddot{z_2}^2+(\ddot{z_3}+g)^2}.
\end{align}
Gathering the angular velocity detailed in \eqref{eq:omega} into the rotation equation \eqref{eq:rNewton}, we obtain the torques described in term of the angular positions:
\begin{equation}
\label{eq:torqueFlat}
\tau_\eta = ^BI \left( W \ddot{\eta} + \dot{W} \dot{\eta} \right)+(W \dot{\eta}) \times (^BI W \dot{\eta}),
\end{equation}
which can be easily interpreted in the flat output space by introducing \eqref{eq:phiflat}--\eqref{eq:psiflat} and which we do not show here due to their convoluted representation.\\
With respect to the notation in \eqref{eq:diff_a}, mapping $\Upsilon_1(\cdot)$ comes from \eqref{eq:flatOuptus}--\eqref{eq:psiflat} (with a derivation degree $q=3$) and mapping $\Upsilon_2(\cdot)$ from \eqref{eq:thrustflat} and the expansion of \eqref{eq:torqueFlat} (with a derivation degree $q+1=4$). For further use we denote $\Upsilon_\xi(\cdot), \Upsilon_\eta(\cdot), \Upsilon_T(\cdot),\Upsilon_{\tau_\eta}(\cdot)$ the mappings which map $\mathbf z$ into the corresponding variable (e.g., $\xi=\Upsilon_\xi(\mathbf z)$).\\

\begin{remark} 
\label{re:2}
Note that, there exist necessary and sufficient conditions for differential flatness as well as the ''sequential"\footnote{There is no gurantee that this procedure finishes in a finite number of steps \cite{levine2011necessary}.} procedure used to test if the system is flat \cite{levine2011necessary}. Some insights on the procedure are summarized here.

Considering the general system \eqref{eq:QuadG} with $f$ smooth, under several specific conditions described in \cite{levine2011necessary}, there exists an \emph{underdetermined} implicit system $F$ with dimension of $n-m$ satisfying \cite{levine2011necessary}:
\begin{equation}
\label{ImpSys}
rank \left(  \frac{\partial f}{\partial 
\bold{u}}  \right)=m  \Leftrightarrow \exists F(\bold{x},\dot{\bold{x}})=0, \  rank \left(\frac{\partial F}{\partial \dot{\bold{x}}}\right)=n-m,
\end{equation}
where $n$ is the number of states and $m$ is the number of inputs. 


Equation \eqref{ImpSys} shows that n-m implicit functions F suffice to express the dynamics of f. Consequently, we may find $m$ variables which can be used to express all the remaining $n-m$ variables. These $m$ variables can be taken as the flat outputs used to describe the rest of the states and the inputs. Similarly, we can follow the sequential procedures provided in \cite{levine2011necessary} to feasibly obtain the flat output representation.

For our particular case \eqref{eq:tmatrix},\eqref{eq:rmatrix}, the two important implicit functions are:
\begin{subequations}
\begin{align}
&s\phi \sqrt{\ddot{x}^2+\ddot{y}^2+(\ddot{z}+g)^2}-s\psi \ddot{x} + c\psi \ddot{y} = 0, \\
&t\theta (\ddot{z}+g)-c\psi\ddot{x}-s\psi\ddot{y}=0.
\end{align}
\end{subequations}
One can easily describe $\phi,\theta$ in terms of the four other states. As a result, the conventional flat output is proposed as $\mathbf z=\begin{bmatrix} z_1 \ z_2 \ z_3 \ z_4 \end{bmatrix}^\top =\begin{bmatrix} x \ y \ z \ \psi \end{bmatrix}^\top$ which has been researched before \cite{bipin2014autonomous,rivera2010flatness,formentin2011flatness}. We found that the `naive' approach of taking $z_4=\psi$ leads to extremely convoluted calculations, therefore, we introduce a new formulation in \eqref{eq:flatOuptus}. 
\ensuremath{\hfill\square}
\end{remark}
\begin{remark}
Other remarks can be made over the simplified version of flat representation usually employed in the state of the art \cite{bipin2014autonomous,rivera2010flatness,formentin2011flatness}. Assuming that yaw angle equals to zero, the formulation \eqref{eq:phiflat},\eqref{eq:thetaflat} simplifies to: 
\begin{subequations}
\begin{align}
\phi &= \arcsin\left(\frac{-\ddot{z_2}}{\sqrt{\ddot{z_1}^2+\ddot{z_2}^2+(\ddot{z_3}+g)^2}}\right),\\
\theta &=\arctan \left( \frac{\ddot{z_1}}{\ddot{z_3}+g} \right).
\end{align}
\end{subequations}
The problem is that while tracking this trajectory the real dynamics will actually vary the yaw angle (a possible solution not followed here is to track $\psi=0$ at the runtime). We do not follow these assumptions in the present paper since we want to exploit all the degrees of freedom, thus fully taking into account the nonlinear system dynamics (including the yaw angle).\ensuremath{\hfill\square}
\end{remark}
Solving \eqref{eq:flatcost} over a B-spline parametrization as in Section \ref{subsec:def} with the flat representation from \eqref{eq:flatOuptus}--\eqref{eq:torqueFlat} we have the general mapping:
\begin{subequations}
\label{eq:flatvariable}
\begin{align}
&\bar \xi =\Upsilon_\xi(\bar{\mathbf z}),\\
&\bar \eta = \Upsilon_\eta(\bar{\mathbf z}),\\
&\bar T=\Upsilon_T(\bar{\mathbf z}),\\
&\bar \tau=\Upsilon_\tau(\bar{\mathbf z}),
\end{align}
\end{subequations}
where the flat output $\bar{\mathbf z}$, the flat states $\bar \xi,\bar \eta$ and the flat inputs $\bar T,\bar \tau$ are given by \eqref{eq:flatOuptus}--\eqref{eq:torqueFlat}.

\section{Control design for trajectory tracking}
\label{sec:control}
This section introduces first the general control strategy usually employed in the literature for a quadcopter system. Next, we propose two control design strategies based on the concept of feedback linearization and facilitated by the flatness construction detailed in Section \ref{subsec:flatReprez}. These first two strategies built for two different missions, control the attitude and the torques of the quadcopter, pave the way for additional control strategies which make more use of the information provided by the a priori generated flat trajectory, i.e., positions, angles, accelerations, thrust force. The idea behind the next three strategies is to use the attitude and torque controllers combined with appropriate input references obtained from the flatness procedure introduced in Section \ref{sec:flat}. 
\subsection{General control scheme}
\label{subsec:classical}
A typical control scheme for quadcopters (and UAV systems in general) is depicted in Figure \ref{fig:control}.
\begin{figure}[ht!]
\begin{center}
\includegraphics[width=\columnwidth]{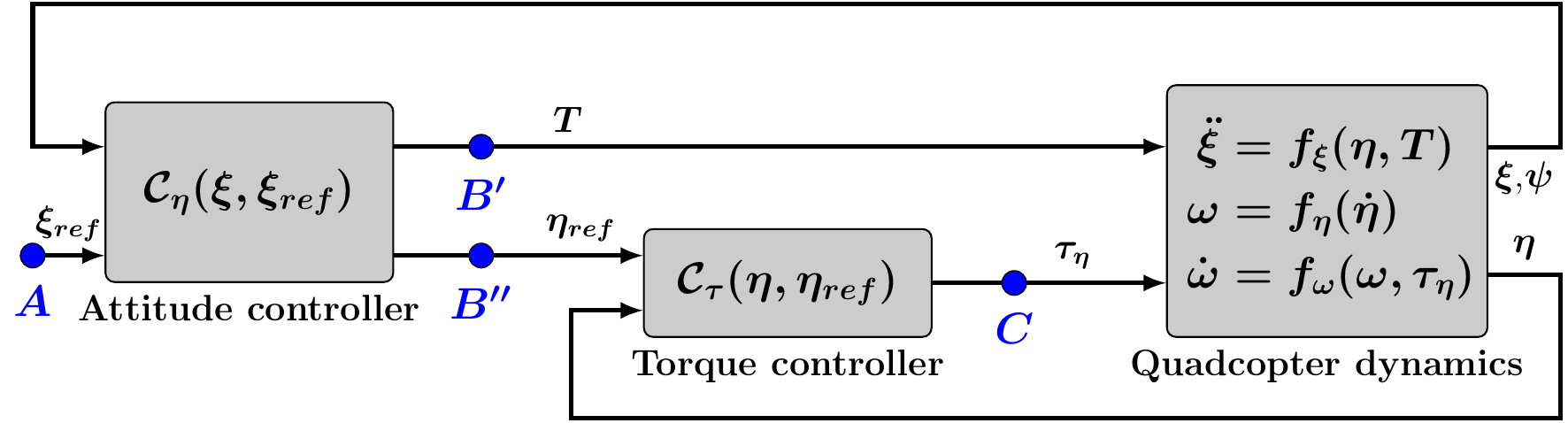}
\caption{Control scheme for a quadcopter system.}
\label{fig:control}
\end{center}
\end{figure}
The preferred approach is to consider two control layers, thus exploiting the decoupling between the translational and rotational dynamics of the quadcopter. At the higher level, an \emph{attitude controller} $\mathcal C_\eta(\xi,\xi_{ref})$ compares an externally given reference position $\xi_{ref}$ with the real position $\xi$ and provides outputs reference angles $\eta_{ref}$ and thrust $T$. The latter is sent directly to the quadcopter and the former to the lower level \emph{torque controller} $\mathcal C_\tau(\eta,\eta_{ref})$ which compares it with the real angles $\eta$ in order to provide the angle torques $\tau_\eta$. 

As also underlined in the control schema of Figure \ref{fig:control}, the attitude controller provides attitude and
thrust references. Usually this allows for simple movements like straight line tracking, circular movement around a fixed center, hovering at a fixed height and so forth. The torque controller provides the torques $\tau_\eta$ which enforce the quadcopter angular positions $\eta$ to track their references $\eta_{ref}$. Standard control design methods for these two controllers can be found in \cite{formentin2011flatness,sydney2013dynamic}. In these studies, they used the simplified model with zero yaw angle to obtain the attitude controller and classical PID regulator inside the torque controller. As mentioned before in Section \ref{sec:model}, the quadcopter rotation dynamics are nonlinear system and are not suitable for linear PID controller. In what follows, we provide effective constructions for the torque and attitude controllers based on feedback linearization which take into account the quadcopter dynamics. These will be introduced in the forthcomming control design strategies via flatness for trajectory tracking. 

\subsection{Torque controller}
\label{subsec:RotC}
The proposed torque controller design builts upon the computed torque control concept which is a special application of feedback linearization of nonlinear systems (basic notions and details on feedback linearization and, in particular, computed torque control can be found in \cite{craig2005introduction}, \cite{tzafestas2013introduction}). It has gained popularity in modern system theory by providing excellent tracking performance through nonlinear compensations (assuming a precise dynamic model is available \cite{jang2015computed}). 

Consider the reformulation of the rotational dynamics \eqref{eq:torqueFlat} as:
\begin{equation}
\label{eq:tor1}
M(\eta)\ddot \eta+V(\eta,\dot \eta)=\tau_\eta,
\end{equation}
with mappings $M(\eta)$, $V(\eta,\dot \eta)$ of appropriate content, i.e, $M(\eta)=^BIW$ and $V(\eta,\dot \eta)=^BI \dot{W} \dot{\eta} +(W \dot{\eta}) \times (^BI W \dot{\eta})$.

By using the partitioned controller scheme introduced in \cite{craig2005introduction}, we take the control law for angle torques as:
\begin{equation}
\label{CTClaw}
\tau_\eta= \alpha \tau' +\beta,
\end{equation}
where $\alpha,\beta$ named model-based portion and $\tau'$ named servo portion are taken as:
\begin{subequations}
\begin{align}
&\alpha=M(\eta),\\
&\beta=V(\eta,\dot \eta),\\
&\tau'=\ddot \eta_{ref}+K_d\dot \epsilon_\eta+K_p\epsilon_\eta+K_i\int \epsilon_\eta dt,
\end{align}
\end{subequations}
with $\epsilon_\eta =\eta_{ref}-\eta$. Introducing \eqref{CTClaw} into \eqref{eq:tor1} leads to a linear error dynamics:
\begin{equation}
\label{eq:linearizederror}
\ddot \epsilon_\eta+K_{d\eta}\dot \epsilon_\eta+K_{p\eta}\epsilon_\eta+K_{i\eta}\int \epsilon_\eta dt=0.
\end{equation}
Note that suitable parameters $K_{p\eta}$, $K_{d\eta}$, $K_{i\eta}$ (diagonal matrices from $\mathbb{R}^{3}$) need to be chosen in \eqref{eq:linearizederror} so that the system is stable. To this end the following proposition is introduced.
\begin{proposition}
\label{prooferror}
Consider a third order linear dynamic system with the bounded and continuous input $U$ (e.g., perturbation triggered by a bounded and continuous wind gust) and the output $E$ which is the scalar error between specific state and its reference:
\begin{equation}
\label{errordynamic}
\ddot{E}+K_d \dot{E}+K_p E+K_i \int E dt=U.
\end{equation}
By choosing the scalar parameters $K_p$, $K_d$, $K_i$ satisfying the conditions:
\begin{equation}
\label{condition}
\begin{cases}
     K_p, K_d, K_i >0   \\
     K_pK_d>K_i\\
\end{cases},
\end{equation}
the system \eqref{errordynamic} is uniformly asymptotically stable.
\end{proposition}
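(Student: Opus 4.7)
The plan is to convert the integro-differential equation \eqref{errordynamic} into a purely differential third-order linear ODE, recast it in state-space form, and then apply the Routh--Hurwitz stability criterion to its characteristic polynomial.

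First, I would introduce the auxiliary variable $F(t) = \int_0^t E(s)\,ds$, so that $\dot F = E$, $\ddot F = \dot E$ and $\dddot F = \ddot E$. Substituting in \eqref{errordynamic} removes the integral and yields the third-order linear ODE
\begin{equation*}
\dddot F + K_d \ddot F + K_p \dot F + K_i F = U.
\end{equation*}
Taking the state $X = (F,\dot F,\ddot F)^\top$, this rewrites as $\dot X = A X + B U$ with
\begin{equation*}
A = \begin{bmatrix} 0 & 1 & 0 \\ 0 & 0 & 1 \\ -K_i & -K_p & -K_d \end{bmatrix}, \qquad B = \begin{bmatrix} 0 \\ 0 \\ 1 \end{bmatrix},
\end{equation*}
whose characteristic polynomial is $p(s) = s^3 + K_d s^2 + K_p s + K_i$.

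Next I would invoke the Routh--Hurwitz criterion for a monic cubic $s^3 + a_2 s^2 + a_1 s + a_0$: every root lies in the open left half-plane if and only if $a_0,a_1,a_2 > 0$ and $a_1 a_2 > a_0$. With $a_2 = K_d,\; a_1 = K_p,\; a_0 = K_i$, these conditions are exactly \eqref{condition}, so $A$ is Hurwitz. For a linear time-invariant system, Hurwitz $A$ implies that the unforced equilibrium is (globally, exponentially and hence) uniformly asymptotically stable; combined with bounded continuous $U$, the state $X$, and in particular $E = \dot F$, remains bounded for all $t \geq 0$ (BIBO stability), yielding the claim.

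Since the argument reduces to verifying a standard cubic Routh--Hurwitz test, there is no real obstacle; the only subtle point is the interpretation of ``uniform asymptotic stability'' in the presence of the exogenous input $U$, which I would handle by stating explicitly that the homogeneous system ($U \equiv 0$) is UAS and that, by LTI superposition, a bounded continuous $U$ gives rise to a bounded response. A Lyapunov-function alternative via the Lyapunov equation $A^\top P + P A = -Q$ is available but would be notationally heavier for a third-order system without adding any insight.
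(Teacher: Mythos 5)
Your proposal follows essentially the same route as the paper: both eliminate the integral term by substituting $F(t)=\int_0^t E(s)\,ds$ (the paper calls it $k(t)$), reduce to the third-order ODE $\dddot F + K_d\ddot F + K_p\dot F + K_i F = U$, and apply the Routh--Hurwitz test to the cubic $s^3+K_ds^2+K_ps+K_i$, for which condition \eqref{condition} is exactly the positivity-plus-$a_1a_2>a_0$ criterion. The only differences are in packaging and in the endgame. You work with the companion matrix in state space, while the paper takes the Laplace transform and reads off the transfer function $1/(s^3+K_ds^2+K_ps+K_i)$ --- an equivalent framing. More substantively, you stop at UAS of the homogeneous system plus boundedness of the forced response, whereas the paper pushes one step further: it invokes Barbalat's lemma (applied to $f=\int_0^t E$ and $f'=E$) to conclude $\lim_{t\to\infty}E(t)=0$ even under the bounded input $U$. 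Your more conservative conclusion is in fact the safer one: the paper's Barbalat step requires $\int_0^t E$ to converge to a finite limit and $E$ to be uniformly continuous, but it only establishes boundedness of $\int_0^t E$, and in any case a bounded (e.g.\ constant) input to a Hurwitz LTI system generally produces a nonzero steady-state output, so $E\to 0$ cannot hold for arbitrary bounded continuous $U$. If you wanted to match the paper's stronger claim you would need an additional hypothesis such as $U\to 0$ or $U\in L^1$; as written, your version proves exactly what the Routh--Hurwitz argument supports.
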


\begin{proof} Gathering $k(t)=\int_{0}^{t} E(\tau) d\tau \Leftrightarrow E=\dot{k}$ into \eqref{errordynamic}, we arrive to the new system in terms of $k(t)$:
\begin{equation}
\label{inferrordynamic}
k^{(3)}+K_d \ddot{k}+K_p \dot{k}+K_i k=U.
\end{equation}
Applying the Laplace transform of $K(s)=\mathcal{L}(k(t))$ and $U(s)=\mathcal{L}(U(t))$ for \eqref{inferrordynamic}, we get:
\begin{align}
\nonumber
&s^3K(s)+K_d s^2K(s)+K_p sK(s)+K_iK(s)=U(s) \\
\label{Sequation}
\Rightarrow& \frac{K(s)}{U(s)}=\frac{1}{s^3+K_d s^2+K_p s+K_i}.
\end{align}
This linear time-invariant system \eqref{inferrordynamic} is BIBO stable , or in other words, the characteristic equation has all its roots with negative real parts if and only if  parameters $K_p,K_d,K_i$ satisfying condition \eqref{condition} which is the Routh--Hurwitz criterion.

With bounded input $U$, the system results in bounded output $k(t)$ over the time interval $[t_0,\infty)$:
\begin{align}
&||k(t)||\leq C\ \forall t \in [t_0,\infty), \ C \in \mathcal{R} \\
\label{inferrorbound}
\Rightarrow&||\int_{0}^{t} E(\tau) d\tau ||\leq C\ \forall t \in [t_0,\infty), \ C \in \mathcal{R}.
\end{align}
Next, we use the Barbalat's lemma \cite{tzafestas2013introduction} which states that a continuous function $f(t)$ satisfying $\lim_{t\to\infty} f(t) = \alpha, \ \alpha < \infty$, its continuous derivative $f'(t)$ satisfies $\lim_{t\to\infty} f'(t) = 0$. Consequently, mapping $f(t),f'(t)$ to appropriate contents, e.g., $\int_{0}^{t} E(\tau) d\tau$ and $E(t)$ respectively, we already obtained the condition \ref{inferrorbound} and since $U(t)$ is continuous, it leads to the continuous $E(t)$. As the result, we come to the conclusion $\lim_{t\to\infty} E(t) = 0$. Thus completing the proof.
\end{proof}
\subsection{Attitude controller}
\label{subsec:attitudeC}
In general, the attitude controller provides the thrust force $T$ and the angle references $\eta_{ref}$ which are necessary for the quadcopter to follow the position references $\xi_{ref}$. The proposed attitude controller design is also based on the concept of feedback linearization of nonlinear systems which will drive the translation dynamics to error dynamics similar with those in \eqref{eq:linearizederror}. 

Considering the roll, pitch, yaw angles and input thrust $T$ in terms of the flat output described in equations \eqref{eq:phiflat}--\eqref{eq:thrustflat}, they can be particularly expressed as $\phi=\Gamma_\phi(\ddot{z_1},\ddot{z_2},\ddot{z_3},z_4)$, $\theta=\Gamma_\theta(\ddot{z_1},\ddot{z_2},\ddot{z_3},z_4)$, $\psi=\Upsilon_\psi(z_4)$ and $T=\Gamma_T(\ddot{z_1},\ddot{z_2},\ddot{z_3})$. We provide the reference to be followed (the output of the attitude controller from the scheme in Figure \ref{fig:control}) as:
\begin{subequations}
\label{controllawAC}
\begin{align}
\label{eq:phiCTC}
&\phi_{ref}=\Gamma_\phi(\ddot{z_1}^*,\ddot{z_2}^*,\ddot{z_3}^*,z_4),\\
\label{eq:thetaCTC}
&\theta_{ref}=\Gamma_\theta(\ddot{z_1}^*,\ddot{z_2}^*,\ddot{z_3}^*,z_4),\\
\label{eq:psiCTC}
&\psi_{ref}=\Upsilon_\psi(\bar{z_4}), \\
\label{eq:thrustCTC}
&T=\Gamma_T(\ddot{z_1}^*,\ddot{z_2}^*,\ddot{z_3}^*),
\end{align}
\end{subequations}
where the corrective term $\xi^* \triangleq \begin{bmatrix} z_1^*\ z_2^* \ z_3^* \end{bmatrix}^\top $ is given as:
\begin{equation}
\label{eq:correctiveterm}
\xi^*=\xi_{ref}+K_{d\xi}\int \epsilon_\xi dt+K_{p\xi}\int \int\epsilon_\xi dt+K_{i\xi}\int \int \int \epsilon_\xi dt,
\end{equation}
with $\epsilon_\xi =\xi_{ref}-\xi$ and $K_{p\xi}$, $K_{d\xi}$, $K_{i\xi}$ are diagonal matrices from $\mathbb{R}^{3}$. \\

Taking into account the external perturbation force, gathering equations \eqref{controllawAC} into the translation equation \eqref{eq:tNewton} leads to the following relation:
\begin{equation}
\ddot{\xi}-\frac{\overrightarrow{F_D}}{m}=\ddot{\xi}_{ref}+K_{p\xi}\epsilon_\xi+K_{d\xi}\dot \epsilon_\xi+K_{i\xi}\int \epsilon_\xi dt,
\end{equation}
which results in the error dynamics:
\begin{equation}
\label{eq:Positionerror}
\ddot{\epsilon_\xi}+K_{p\xi}\epsilon_\xi+K_{d\xi}\dot \epsilon_\xi+K_{i\xi}\int \epsilon_\xi dt=-\frac{\overrightarrow{F_D}}{m},
\end{equation}
similarly to Proposition \ref{prooferror}.

In what follows, the attitude and torque controllers will prove useful for additional strategies which allow feedback control via planned flat trajectory. 
\subsection{Flat angle tracking}
\label{subsubsection:CTCFlat}
Starting from the lower level and using only the torque controller introduced in Section \ref{subsec:RotC}, it is possible to control the quadcopter by providing directly the input components $T$ and $\eta_{ref}$ obtained by the flatness-based trajectory generation (insertion at points $B'$ and $B''$ in Figure \ref{fig:control}):
\begin{equation}
\label{eq:controlTorque}
T=\bar{T}, \: \eta_{ref}=\bar{\eta}.
\end{equation}
Then, the torque controller gives the angle torques $\tau_\eta$ as detailed in Section \ref{subsec:RotC}. According to Proposition \ref{prooferror}, the quadcopter rotating system will be asymptotically stable.

Applying this strategy, the angle tracking leads actually to the position tracking in the predicted case \footnote{The predicted case is the combination of reference trajectory coming with wind information used in the flatness procedure.}. Note that from a practical viewpoint, this strategy is is realistic for small-scale quadcopters, (e.g., flycam, radio controller quadcopter) since the angle feedback can be approximately obtained by available sensors such as gyroscope, accelerometer and geomagnetic field sensor, while the position feedback is difficult to retrieve. It is worth underlining that this open-loop functioning for position is sensitive to disturbances and other sources of error. To counteract this limitation, in the next section we will introduce a position feedback loop.
\subsection{Flat position tracking}
\label{subsubsection:CTCPosition}
This controller which is based on the attitude controller presented in Section \ref{subsec:attitudeC}, compares the reference $\bar{\xi}$ and real position $\xi$ and provides the thrust force $T$ and angle torques $\tau_\eta$. The general idea is well illustrated in Figure \ref{fig:control}:
\begin{itemize}
\item[-] The trajectory generation provides the references $\bar{\xi}$ and $\bar{z_4}$ (insertion at point A) as in \eqref{eq:flatvariable}.
\item[-] The attitude controller provides thrust force (insertion at point B') and necessary angles $\eta_{ref} \triangleq \begin{bmatrix} \phi_{ref} \ \theta_{ref} \ \psi_{ref} \end{bmatrix}^\top$ as introduced in equation \eqref{controllawAC} but in terms of $\bar{z_4}$ in stead of $z_4$.
\item[-] The angle torques $\tau_\eta$ are calculated based on the rotation equation \eqref{eq:torqueFlat} in terms of $\eta_{ref}$, then, sent to the quadcopter system (insertion at point C).
\end{itemize}

This controller, as we will also validate through simulations, achieves the good results for position tracking. Note that, the quadcopter position can be straightforward to be observed by using GPS (Global Positioning System). One solution is through the use of the technique called \emph{differential GPS} or \emph{dual frequency GPS} which gives a resolution of 1 $m$, if a second static receiver at a known exact position is employed \cite{tzafestas2013introduction}. However, the open-loop functioning for angle of this strategy generates various errors of yaw angle $\psi$. 

Next, a combination of the two above procedures is discussed.

\subsection{Combined flat angle and position tracking}
\label{subsubsec:CTCFAT}
Considering the two previous strategies, we recognized the necessity of both position and angle feedback. Hence, this controller design follows the two-layer classical control strategy described in \ref{subsec:classical}. More precisely, at the upper level we use the attitude controller detailed in Section \ref{subsec:attitudeC} which compares the position reference $\bar\xi$ and the real position $\xi$ to provide the thrust $T$ (\ref{eq:thrustCTC}) and the reference angles $\eta_{ref}=\begin{bmatrix} \phi_{ref} \ \theta{ref} \ \psi_{ref} \end{bmatrix}^\top$ (\ref{eq:phiCTC}--\ref{eq:psiCTC}). The angles are sent to the lower level which is the torque controller detailed in Section \ref{subsec:RotC}. The torque controller provides the angle torques $\tau_\eta$ to the quadcopter system. Note that, the quadcopter position feedback is necessary for the attitude controller and the orientation for the torque controller. Assuming we have at our disposal all of the ideal necessary sensors, this strategy provides the best trajectory tracking results which will be demonstrated and compared in the next section.

\section{Simulation results and comparison}
\label{sec:sim}
In this section, we first present simulation results of our  control strategies introduced in section \ref{sec:control}. Then, various comparisons of our contributions with other flatness-based control approaches \cite{chamseddine2012flatness,rivera2010flatness,formentin2011flatness,mellinger2011minimum}
 are provided.
\subsection{Simulation results}
This section presents extensive simulation results for a Crazyflie $2.0$ quadcopter \cite{Quad} characterized by the following parameters:
\begin{itemize}
\item[-] each of the four propellers has bounds on the (load-free) rotating speed $|\omega_i| \leq 58800$ $[rpm]$ and angular acceleration $|\Delta \omega_i|\leq 1000$ $[rad/s^2]$;
\item[-] $g=9.81$ $[m/s^2]$, $m=0.5$ $[kg]$, $L=0.225$ $[m]$, $K_T=2.98\times 10 ^{-6}$ $[kgm]$, $b=1.14\times10^{-7}$ $[kgm^2]$, $I_{xx}=I_{yy}=4.856\times10^{-3}$ $[kgm^2]$, $I_{zz}=8.801\times 10^{-3}$ $[kgm^2]$.
\end{itemize}
The simulation scenarios consider a collection of way-points $\mathbb W=\{\begin{bmatrix} 0 & 0 & 5 \end{bmatrix}^\top, \\ \begin{bmatrix} 0.4 & 0.9 & 6 \end{bmatrix}^\top, \begin{bmatrix} 1.4 & 1.2 & 6.5 \end{bmatrix}^\top, \begin{bmatrix} 2 & 0.8 & 5.7 \end{bmatrix}^\top, \begin{bmatrix} 1.5 & -0.5 & 5 \end{bmatrix}^\top\}$ with the associated time instants $\{0,3,5.5,7,10\}$ second.

We implement the optimization problem \eqref{eq:flatcost} by choosing to minimize the total trajectory length and to pass through the a priori given way-points in a total time $T=10s$. We consider B-spline basis functions of degree $d=6$ and a collection of $12$ control points as in \eqref{eq:control_points} for the flat output parametrization. The resulting trajectory, position, angles\footnote{We used a standard polynomial function $z_4(t)$ for $\psi(t)$ to smoothly increase from 0 to 10 degrees in 10s.}, thrust and torques are depicted in Figure \ref{fig:flat} and Figure \ref{fig:bla}. 
\begin{figure}[h!]
\begin{center}
\includegraphics[width=1\columnwidth]{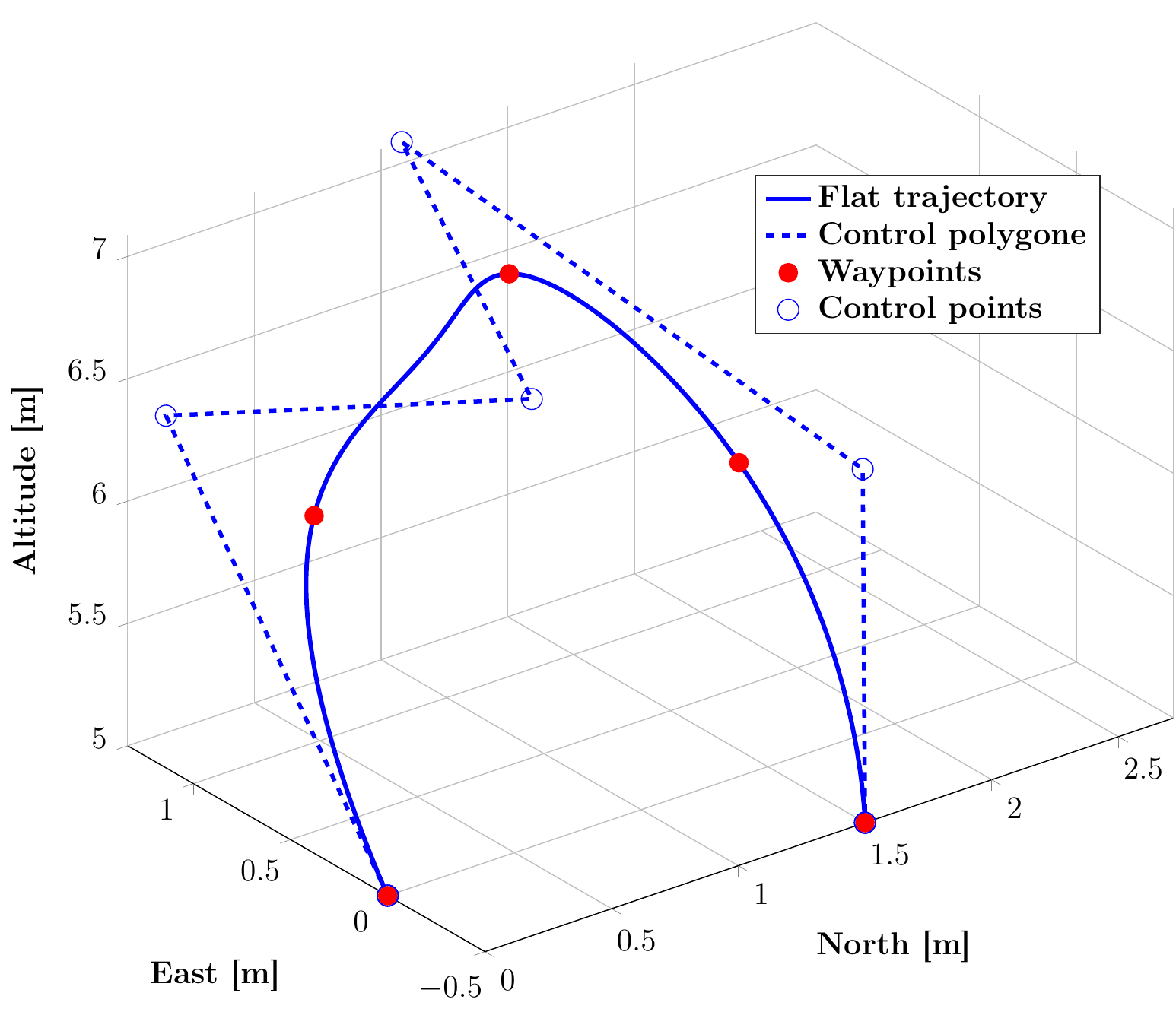}
\caption{B-spline parametrized flat trajectory with the associated control points and passing through way-points.}
\label{fig:flat}
\end{center}
\end{figure}

\begin{figure*}
\begin{center}
\subfloat[Flat position on the three axes and the control points.]{\label{fig:flatPosition}\includegraphics[width=1\textwidth]{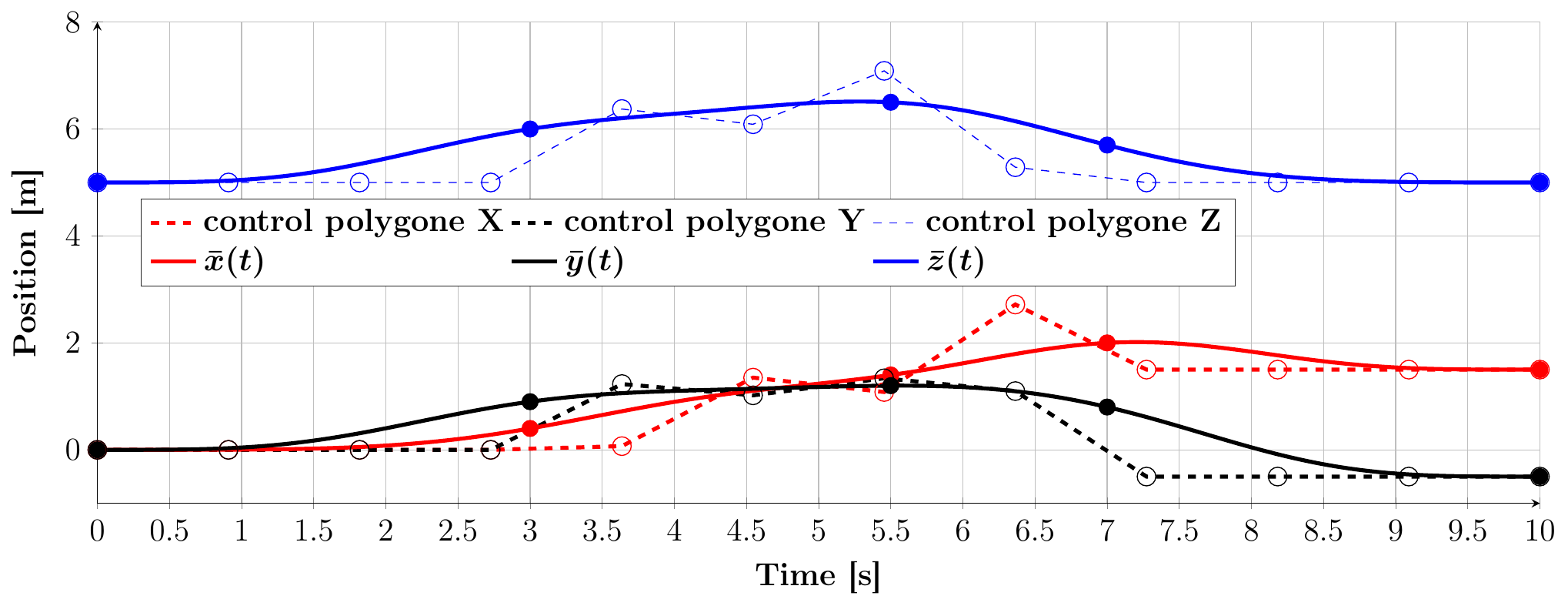}}\\
\subfloat[Flat reference angles, roll, pitch and yaw.]{\label{fig:flatAngles}\includegraphics[width=1\textwidth]{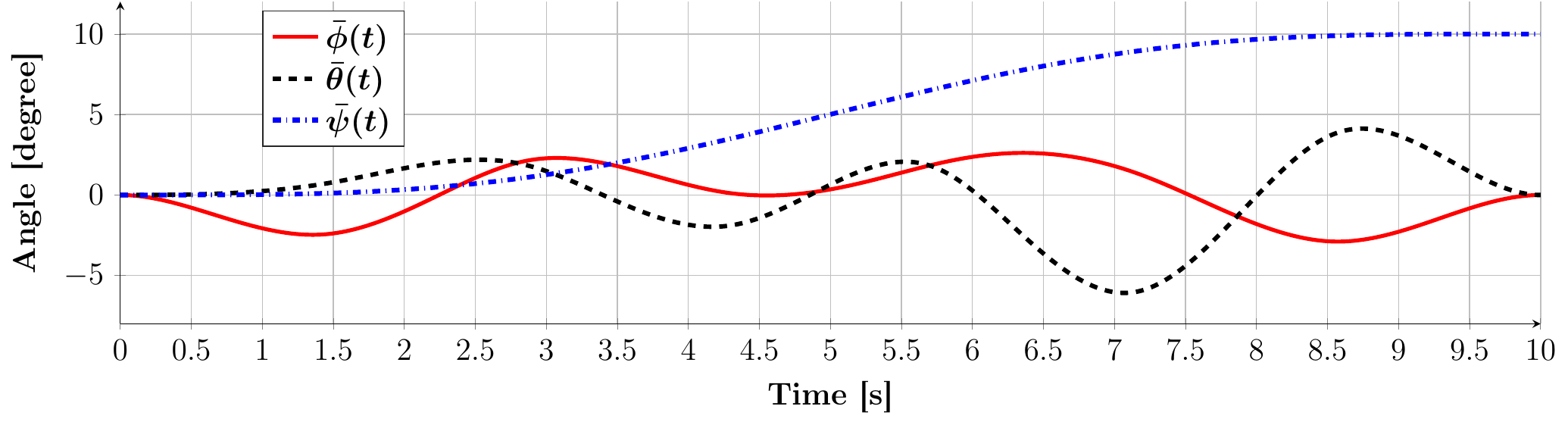}}\\
\subfloat[Flat reference of the thrust force.]{\label{fig:flatThrust}\includegraphics[width=1\textwidth]{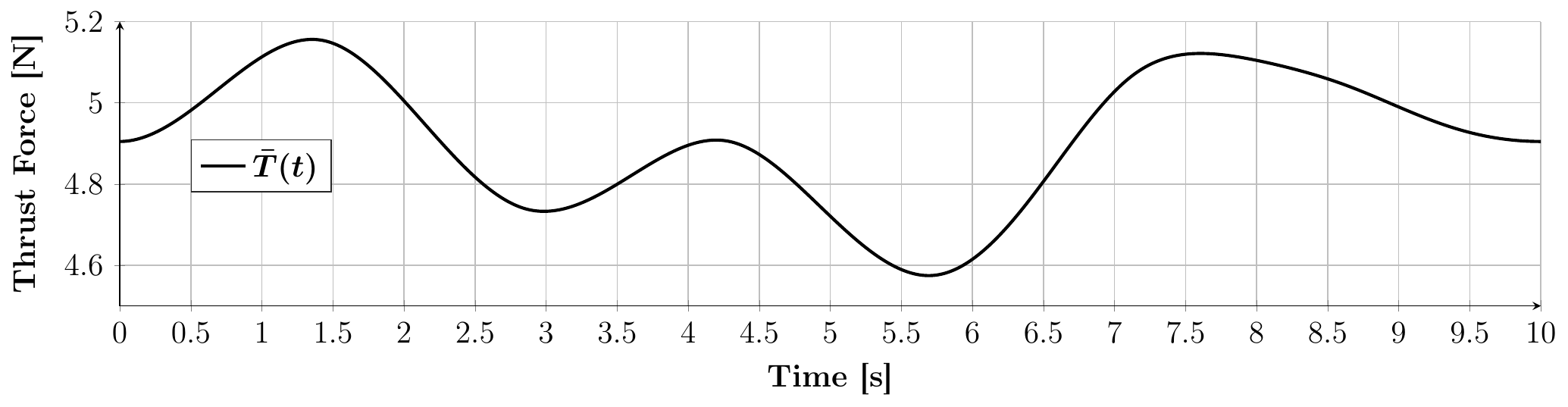}}\\
\subfloat[Flat reference of the angle torques.]{\label{fig:flatTorque}\includegraphics[width=1\textwidth]{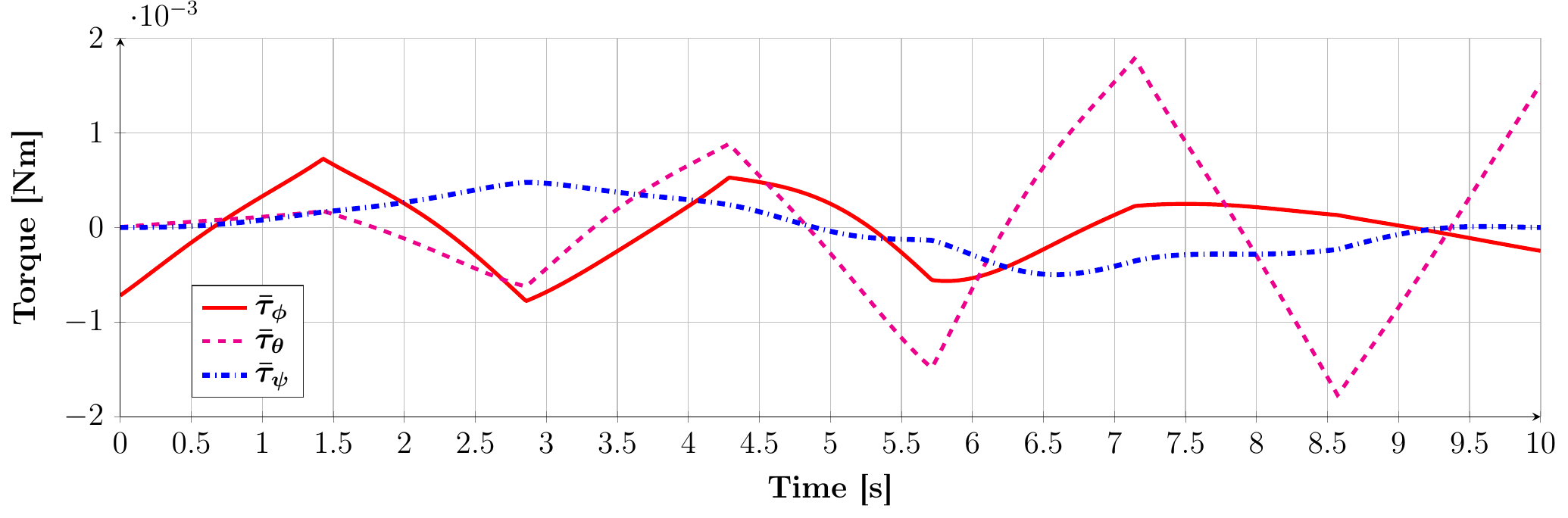}}
\caption{Flat references for positions, angles, thrust and angle torques.}
\label{fig:bla}
\end{center}
\end{figure*}

In what follows we consider the various control strategies discussed in Section \ref{sec:control} and apply them for the reference trajectory generated earlier. For each of these approaches we consider two cases of no wind and wind profile with a maximum speed up to $25$ [$km/h$] (the values are taken from \url{www.meteoblue.com} over the year 2015 in Rh\^one Alpes region, France). The control algorithms implementation are done using Yalmip \cite{YALMIP}, MPT Toolboxes \cite{MPT3} in Matlab/Simulink 2015a over a horizon of $T=10$ sec with a fixed sampled time of $0.01$ sec. The tuning parameters $K_p$, $K_i$, $K_d$ of each controller are delineated in Table \ref{tab:Para}. 
\begin{table}
\centering
\begin{tabular}{|c|c|c|c|}
\hline
\text{Control Scheme}           &$K_p$      &$K_d$		&$K_i$      \\\hline
Torque controller \ref{subsec:RotC}  &   &  &  \\
used in \ref{subsubsection:CTCFlat},\ref{subsubsec:CTCFAT} & $diag\{225,225,225\}$ & $diag \{30,30,30\}$ & $diag\{0,0,0\}$\\\hline
Attitude controller \ref{subsec:attitudeC}     & 	  & 	& \\
used in \ref{subsubsection:CTCPosition},\ref{subsubsec:CTCFAT}	& $diag\{25,25,9\}$  &  $diag\{10,10,6\}$ & $diag\{1,1,0.3\}$ \\  \hline
\end{tabular}
\caption{Parameters of rotation and attitude controller}
\label{tab:Para}
\end{table}
\begin{table}
\centering
\begin{tabular}{|c|c|c|}
\hline
\text{Controller}           &IAE        &  IAE \\
							&no wind    & wind gust \\\hline
Flat angle tracking \ref{subsubsection:CTCFlat} & 0.0151      & 52.2087 \\ \hline
Flat position tracking \ref{subsubsection:CTCPosition} & 0.7210      & 0.9419 \\ \hline
Combined flat angle	and & 0.0227 	  & 0.6221  \\
position tracking  \ref{subsubsec:CTCFAT}  & &     \\\hline 
 
\end{tabular}
\caption{Integral Absolute magnitude of Errors (IAE) of positions \\using the control strategies in Section \ref{sec:control}.}
\label{tab:IAE}
\end{table}
\begin{figure}[h!]
\begin{center}
\includegraphics[width=1\columnwidth]{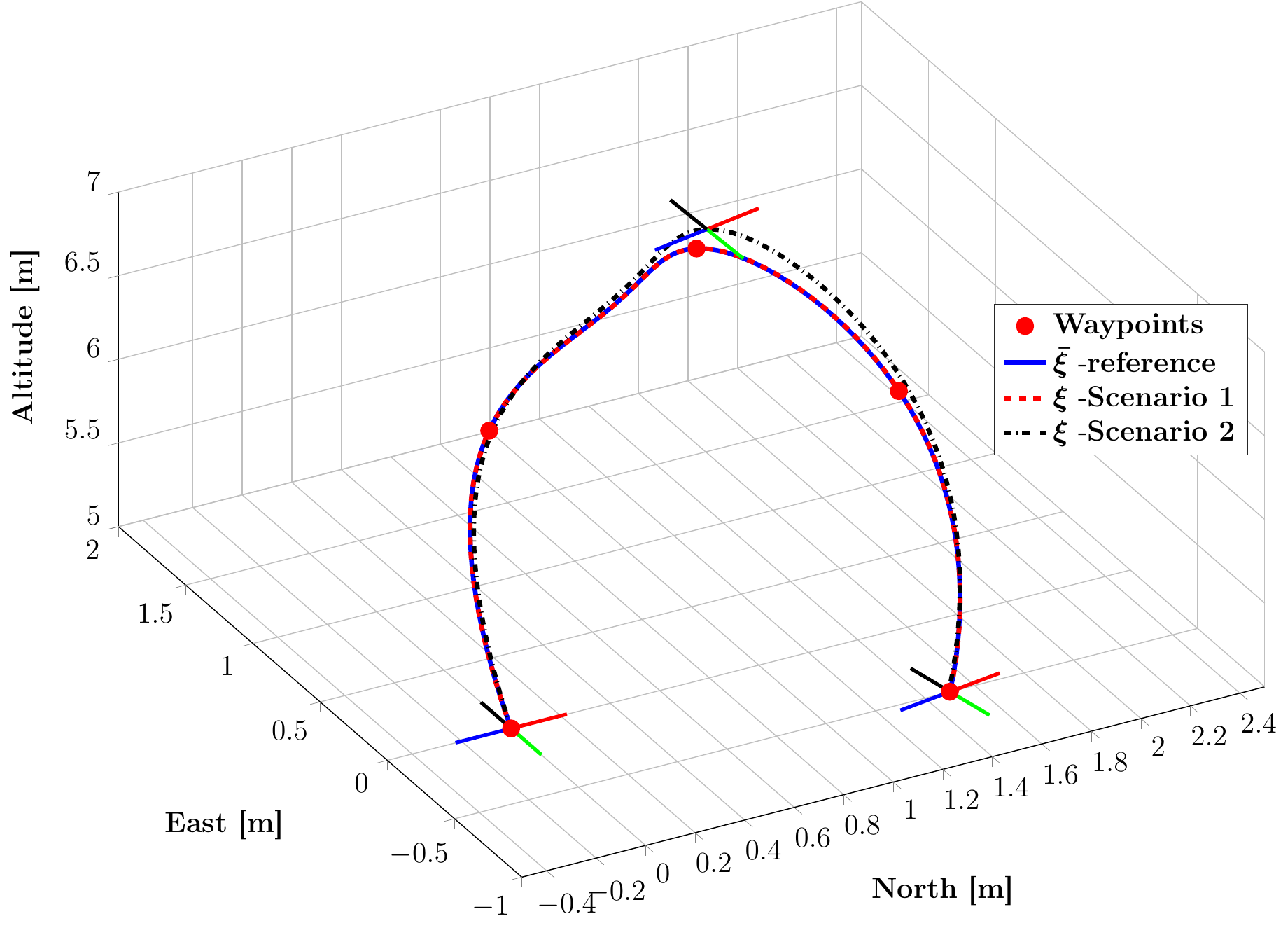}
\caption{Trajectories of quadcopter under different scenarios.}
\label{fig:traj}
\end{center}
\end{figure}

\begin{figure}[h!]
\begin{center}
\includegraphics[width=1\columnwidth]{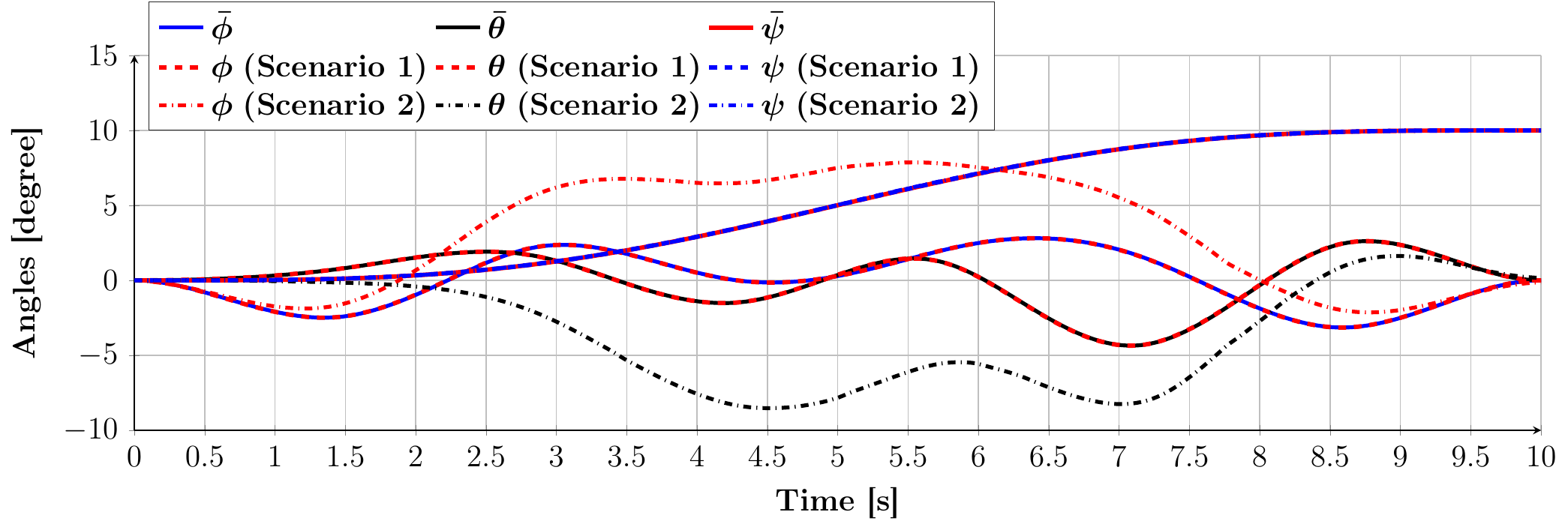}
\caption{Roll, pitch, yaw angles of quadcopter under different scenarios.}
\label{fig:ang}
\end{center}
\end{figure}
For comparison, in each simulation case we take the Integral of Absolute magnitude of the Error (IAE) over the position: $IAE=\int_{t_0=0}^{t_f=10} ||\bar\xi-\xi|| dt$. The results are gathered in Table \ref{tab:IAE}, which leads us to several observations. First of all, under nominal functioning (no wind) the three controllers are comparable, i.e, the IAE values are small and not far away from each other, with controller \ref{subsubsection:CTCFlat} being slightly better. However, in the presence of high disturbances, the controller \ref{subsubsection:CTCFlat} fails as we also indicated in Section \ref{subsubsection:CTCFlat} (IAE$=52.2087$ is too large comparing to the others). In contrast, the flat position tracking controller \ref{subsubsection:CTCPosition} and the combined flat angle and position tracking controller \ref{subsubsec:CTCFAT} have proven their trajectory tracking capabilities even for high wind conditions. Note that, the combined flat angle and position tracking controller \ref{subsubsec:CTCFAT} is behaving best, this being, in our opinion, the most effective control strategy.

We provide illustrations of simulation results for two scenarios: 
\begin{enumerate}
\item[-]Scenario 1: the aim is to track the reference using flat angle tracking controller detailed in Section \ref{subsubsection:CTCFlat} with no wind condition;
\item[-]Scenario 2: the aim is to track the reference using combined flat angle and position tracking controller detailed in Section \ref{subsubsec:CTCFAT} in the wind blow condition with maximum wind speed up to 25 [$km/h$];
\end{enumerate}

Figure \ref{fig:traj} illustrates the quadcopter actual motions  resulted for the two scenarios (for the scenario 1 the actual trajectory is plotted in solid blue line and for the scenario 2 in dash-dotted black line) comparing to the reference trajectory given in dash-dotted red line. It can be seen that the differences w.r.t. the reference are very small although the maximum wind speed of $25$ [$km/h$] is such a difficult condition for trajectory tracking of small-scale UAV in general. 

For the scenario 2, Figure \ref{fig:traj} and \ref{fig:ang} illustrate the quadcopter blown in the positive directions of the IF (East--North--Up coordinate) due to a wind profile from north-east. Figure \ref{fig:ang} proves the effectiveness of the combined flat angle and position tracking controller \ref{subsubsec:CTCFAT} which appropriately tilted the quadcopter to fight against the northeast wind blow.

Our simulations have proven that the various control strategies described in the paper are all capable to track the trajectory in the nominal case and, with specific degree of accuracy, the flat position tracking controller \ref{subsubsection:CTCPosition} and the combined flat angle and position tracking controller \ref{subsubsec:CTCFAT} are effective to track the trajectory in more challenging conditions. The robustness of the controllers can be further enhanced by choosing different corrective terms and/or different parameters.
\subsection{Discussions}
Detailed comparisons are difficult to provide since most of the papers treating this topic provide incomplet data for the flatness generation and inner control loops design makes a point-by-point simulation hard to accomplish, we note, however, several remarks which prove the novel elements of our flatness-based control approach with respect to the references \cite{chamseddine2012flatness,rivera2010flatness,formentin2011flatness,mellinger2011minimum}.
\begin{enumerate}
\item The flat trajectories generated are not always used in simulation. For example, in \cite{formentin2011flatness} the reference tracked is actually a sequence of delayed step functions. In our opinion this actually discards the major advantage of flat parametrizations, that is, of ensuring a feasible trajectory. 
\item The flat output parametrizations often use simple representations (monomials in \cite{chamseddine2012flatness} or cubic splines in \cite{rivera2010flatness}). These implementations strongly limit the number of constraints which can be considered and may lead to numerical issues. In contrast, the b-spline parametrization used in this paper offers smoothness guarantees, is impervious to the number of constraints (in the sense that the degree of the functions does not depend on them) and, most importantly, offers an analytical framework for cost minimizations (e.g., for trajectory length).
\item All flat implementations encountered in the literature consider symplifying assumptions (yaw angle kept constant, small angles, etc). In contrast, the flat representation proposed in \ref{eq:flatOuptus} and presented in Section \ref{sec:flat} can provide explicit (and free of trigonometric terms) descriptions of all state and input components of the quadcopter dynamics. In particular, the angles and torques have a more compact representation, see \ref{appendix} for details. While the resulting flat representations (especially for torques) are still cumbersome, they are nonetheless much more compact than the representations which assume the standard flat output detailed in Remark \ref{re:2} (quantitatively, the difference in formulation length is of an order of magnitude). Not in the least, the novel approach proposed here can be easily employed in similar schemes proposed in the literature \cite{chamseddine2012flatness,rivera2010flatness,formentin2011flatness,mellinger2011minimum} and will lead to simpler formulations and thus to more efficient control loops.
\item In many cases (and in the strategy proposed in Section \ref{sec:control}) the flat angles are used as references for low-level control loops (the ones providing the angle torque values). Many papers employ PD or similar control schemes \cite{formentin2011flatness,mellinger2011minimum} which we consider to be an inferior alternative to the torque control approach \ref{subsec:RotC} proposed here. Assuming accurate parameter measurements this strategy provides a closed-loop linearized rotation dynamic which can handle abrupt reference changes and has a good tracking performances.
\item An aspect sometimes neglected \cite{chamseddine2012flatness,rivera2010flatness} is the difference between angular velocities and the Euler angles rates (an acceptable assumption for small roll and pitch values). While this simplification leads to simpler torque and angle formulations it becomes imprecise at large roll and pitch values and leads thus to imprecise angle tracking. Therefore, while \cite{chamseddine2012flatness,rivera2010flatness} propose strategies similar with the flat position tracking from Section \ref{subsubsection:CTCPosition}, our approach can accurately handle the nonliniearities introduced by the Euler angles and permits to track the position components.
\end{enumerate}
\section{Conclusions}
\label{sec:concl}
This paper addressed the challenging trajectory tracking problem for quadcopter systems using an effective combination between differential flatness and feedback linearization. Classified as severely underactuated systems, detailed kinematic and dynamical models of a quadcopter were required. Next, a reference trajectory was generated off-line using an original flat representation. On-line, feedback linearization-based controllers via flatness were designed for tracking the feasible reference. The power of flat output characterization allowing full flat parametrization of states and inputs, and the state feedback control methods applied for the original nonlinear quadcopter dynamics system without any loss of precision shows promise. These were detailed and validated through proof of concept examples, illustrations and simulation results. 

The original contributions stem from:
\begin{itemize}
\item the flat trajectory construction for the strongly nonlinear quadcopter system which provided positions, angles, thrust and torques;
\item the control strategies based on feedback linearization (i.e., flat angle tracking, flat position tracking) which allowed both orientation and position control without any simplification on the quadcopter system.
\end{itemize}
Future work will concentrate on the introduction of bounded/stochastic disturbances and trajectory reconfiguration mechanisms.
\section*{References}

\bibliography{ref}
\newpage
\appendix
\section{Flat representation \eqref{eq:phiflat}--\eqref{eq:torqueFlat} for the quadcopter dynamics \eqref{eq:tmatrix}--\eqref{eq:rmatrix} }
\label{appendix}
\subsection{Position, angle and thurst components of the quadcopter dynamics \eqref{eq:tmatrix}--\eqref{eq:rmatrix}}

Position components expressed in term of the flat output:
\begin{subequations}
\begin{align}
&x=z_1,\\
&y=z_2,\\
&z=z_3.
\end{align}
\end{subequations}

Angle components expressed in term of the flat output:
\begin{subequations}
\begin{align}
&\phi=\arcsin\left(\frac{2z_4\ddot{z_1}-(1-z_4^2)\ddot{z_2}}{(1+z_4^2)\sqrt{\ddot{z_1}^2+\ddot{z_2}^2+(\ddot{z_3}+g)^2}}\right), \\
&\theta=\arctan \left( \frac{(1-z_4^2)\ddot{z_1}+2z_4\ddot{z_2}}{(1+z_4^2)(\ddot{z_3}+g)} \right), \\
&\psi=2 \arctan (z_4).
\end{align}
\end{subequations}

Thurst expressed in term of the flat output:
\begin{equation}
T=m\sqrt{\ddot{z_1}^2+\ddot{z_2}^2+(\ddot{z_3}+g)^2}.
\end{equation}
\subsection{Torques components of the quadcopter dynamics \eqref{eq:tmatrix}--\eqref{eq:rmatrix}}
Torques expressed in term of $[k_1,k_2,k_3,z_4]=[\ddot{z_1},\ddot{z_2},\ddot{z_3},z_4]$:
\begin{equation}
\nonumber
\begin{split}
\normalsize
\tau_\phi&=I_{xx} \Biggl( \dfrac{1}{\sqrt{1-\dfrac{(2z_4k_1-(1-z_4^2)k_2)^2}{(1+z_4^2)^2(k_1^2+k_2^2+k_3^2)}}} \Bigl( \dfrac{1}{(1+z_4^2)\sqrt{k_1^2+k_2^2+k_3^2}}\Bigl(2\ddot{z_4}k_1\\
&+4\dot{z_4}\dot{k_1}+2z_4\ddot{k_1}+2\ddot{z_4}^2k_2+2z_4\ddot{z_4}k_2+4z_4\dot{z_4}\dot{k_2}-(1-z_4^2)\ddot{k_2} \Bigl)\\
&-\dfrac{4 \Bigl(2\ddot{z_4}k_1+2z_4\dot{k_1}+2z_4\dot{z_4}k_2-(1-z_4^2)\dot{k_2} \Bigl)z_4\dot{z_4}}{(1+z_4^2)^2\sqrt{k_1^2+k_2^2+k_3^2}} +\dfrac{8 \Bigl( 2z_4k_1-(1-z_4^2)k_2 \Bigl) z_4^2 \dot{z_4}^2}{(1+z_4^2)^3\sqrt{k_1^2+k_2^2+k_3^2}}\\
&-\dfrac{2\Bigl(2z_4k_1-(1-z_4^2)k_2\Big)\dot{z_4}^2+2\Bigl( 2z_4k_1-(1-z_4^2)k_2\Bigl) z_4 \ddot{z_4}}{(1+z_4^2)^2\sqrt{k_1^2+k_2^2+k_3^2}}\\
&-\dfrac{\Bigl( 2\dot{z_4}k_1+2z_4\dot{k_1}+2z_4\dot{z_4}k_2-(1-z_4^2)\dot{k_2}\Bigl) \Bigl(2k_1\dot{k_1}+2k_2\dot{k_2}+2k_3\dot{k_3} \Bigl)}{(1+z_4^2)(k_1^2+k_2^2+k_3^2)^{3/2}} \\
&-\dfrac{\Bigl(2z_4k_1-(1-z_4^2)k_2\Big)\Bigl(\dot{k_1}^2+k_1\ddot{k_1}+\dot{k_2}^2+k_2\ddot{k_2}+\dot{k_3}^2+k_3\ddot{k_3} \Bigl)}{(1+z_4^2)(k_1^2+k_2^2+k_3^2)^{3/2}} \\
&+\dfrac{2\Bigl( 2z_4k_1-(1-z_4^2)k_2)\Bigl) \Bigl( 2k_1\dot{k_1}+2k_2\dot{k_2}+2k_3\dot{k_3} \Bigl)z_4\dot{z_4}}{(1+z_4^2)^2(k_1^2+k_2^2+k_3^2)^{3/2}}\\
&+\dfrac{3\Bigl(2z_4k_1-(1-z_4^2)k_2 \Bigl)\Bigl( k_1\dot{k_1}+k_2\dot{k_2}+k_3\dot{k_3}\Bigl)^2}{(1+z_4^2)(k_1^2+k_2^2+k_3^2)^{5/2}}\Bigl)\\
&-\dfrac{1}{2}\dfrac{1}{\Bigl( 1-\dfrac{(2z_4k_1-(1-z_4^2)k_2)^2}{(1+z_4^2)^2(k_1^2+k_2^2+k_3^2)} \Bigl)^{3/2}} \Bigl( \Bigl( \dfrac{2\dot{z_4}k_1+2z_4\dot{k_1}+2z_4\dot{z_4}k_2-(1-z_4^2)\dot{k_2}}{(1+z_4^2)\sqrt{k_1^2+k_2^2+k_3^2}}\\
&-\dfrac{2\Bigl(2z_4k_1-(1-z_4^2)k_2 \Bigl) z_4 \dot{z_4}}{(1+z_4^2)^2\sqrt{k_1^2+k_2^2+k_3^2}}-\dfrac{\Bigl(2z_4k_1-(1-z_4^2)k_2 \Bigl)\Bigl( k_1\dot{k_1}+k_2\dot{k_2}+k_3\dot{k_3}\Bigl)}{(1+z_4^2)(k_1^2+k_2^2+k_3^2)^{3/2}}\Bigl)\\
&\Big(\dfrac{2\Bigl( 2z_4k_1-(1-z_4^2)k_2 \Bigl)\Big( 2\dot{z_4}k_1+2z_4\dot{k_1}+2z_4\dot{z_4}k_2-(1-z_4^2)\dot{k_2} \Bigl)}{(1+z_4^2)^2(k_1^2+k_2^2+k_3^2)}\\
&+\dfrac{4\Bigl( 2z_4k_1-(1-z_4^2)k_2 \Bigl)^2z_4 \dot{z_4}}{(1+z_4^2)^3(k_1^2+k_2^2+k_3^2)}+\dfrac{\Bigl( 2z_4k_1-(1-z_4^2)k_2 \Bigl)^2\Bigl( 2k_1\dot{k_1}+2k_2\dot{k_2}+2k_3\dot{k_3} \Bigl)}{(1+z_4^2)^2(k_1^2+k_2^2+k_3^2)^2} \Bigl) \Bigl)\\
&-\dfrac{2\Bigl( -2z_4 \dot{z_4} k_1 +(1-z_4^2)\dot{k_1}+2\dot{z_4}k_2+2z_4\dot{k_2}\Bigl) \dot{z_4}}{k_3(1+z_4^2)^2\sqrt{1+\dfrac{\Bigl((1-z_4^2)k_1+2z_4k_2 \Bigl)^2}{k_3^2(1+z_4^2)^2}}}+\dfrac{2\Bigl((1-z_4^2)k_1+2z_4k_2\Bigl)\dot{z_4}\dot{k_3}}{k_3^2(1+z_4^2)^2\sqrt{1+\dfrac{\Bigl((1-z_4^2)k_1+2z_4k_2 \Bigl)^2}{k_3^2(1+z_4^2)^2}}}
\end{split}
\end{equation}
\begin{equation}
\begin{split}
&+\dfrac{8\Bigl( (1-z_4^2)k_1+2z_4k_2 \Bigl)z_4 \dot{z_4}^2}{k_3^2(1+z_4^2)^3\sqrt{1+\dfrac{\Bigl((1-z_4^2)k_1+2z_4k_2 \Bigl)^2}{k_3^2(1+z_4^2)^2}}} + \dfrac{1}{k_3^2(1+z_4^2)^2 \left(1+\dfrac{\Bigl((1-z_4^2)k_1+2z_4k_2 \Bigl)^2}{k_3^2(1+z_4^2)^2}\right)^{3/2}}\Bigl(\\
&\dot{z_4}\Bigl((1-z_4^2)k_1+2z_4k_2\Bigl)\Bigl(\dfrac{2\left( (1-z_4^2)k_1+2z_4k_2\right)\left( -2z_4\dot{z_4}k_1+(1-z_4^2)\dot{k_1}+2\dot{z_4}k_2+2z_4\dot{k_2} \right)}{k_3^2(1+z_4^2)^2}\\
&-\dfrac{2\left( (1-z_4^2)k_1+2z_4k_2 \right)^2\dot{k_3}}{k_3^3(1+z_4^2)^2} - \dfrac{4z_4\dot{z_4}\left( (1-z_4^2)k_1+2z_4k_2  \right)^2}{k_3^2(1+z_4^2)^3}\Bigl) \Bigl) \\
&-\dfrac{2\ddot{z_4}\left( (1-z_4^2)k_1+2z_4k_2\right)}{k_3(1+z_4^2)^2\sqrt{1+\dfrac{\Bigl((1-z_4^2)k_1+2z_4k_2 \Bigl)^2}{k_3^2(1+z_4^2)^2}}}+(I_{yy}-I_{zz})\Bigl(\dfrac{1}{1+\dfrac{\Bigl((1-z_4^2)k_1+2z_4k_2 \Bigl)^2}{k_3^2(1+z_4^2)^2}}\Bigl(\\
&\sqrt{1-\dfrac{(2z_4k_1-(1-z_4^2)k_2)^2}{(1+z_4^2)^2(k_1^2+k_2^2+k_3^2)}}\Bigl( \dfrac{1}{k_3(1+z_4^2)}\Bigl(-2z_4\dot{z_4}k_1+(1-z_4^2)\dot{k_1}+2\dot{z_4}k_2+2z_4\dot{k_2}\Bigl)\\
&-\dfrac{\dot{k_3}\left( (1-z_4^2)k_1+2z_4k_2 \right)}{k_3^2(1+z_4^2)}-\dfrac{2z_4\dot{z_4}((1-z_4^2)k_1+2z_4k_2)}{k_3(1+z_4^2)^2}\Bigl) \Bigl)\\
&+\dfrac{2\left( 2z_4k_1-(1-z_4^2)k_2\right)\dot{z_4}}{(1+z_4^2)^2\sqrt{k_1^2+k_2^2+k_3^2}\sqrt{1+\dfrac{\Bigl((1-z_4^2)k_1+2z_4k_2 \Bigl)^2}{k_3^2(1+z_4^2)^2}}} \Bigl) \Bigl(\\
&-\dfrac{1}{(1+z_4^2)\sqrt{k_1^2+k_2^2+k_3^2}\sqrt{1+\dfrac{\Bigl((1-z_4^2)k_1+2z_4k_2 \Bigl)^2}{k_3^2(1+z_4^2)^2}}}\Bigl( \Bigl( 2z_4k_1-(1-z_4^2)k_2 \Bigl) \Bigl( \\
&\dfrac{-2z_4\dot{z_4}k_1+(1-z_4^2)\dot{k_1}+2\dot{z_4}k_2+2z_4\dot{k_2}}{k_3(1+z_4^2)}-\dfrac{\left((1-z_4^2)k_1+2z_4k_2\right)\dot{z_3}}{k_3^2(1+z_4^2)}\\
&-\dfrac{2\left((1-z_4^2)k_1+2z_4k_2 \right)z_4\dot{z_4}}{k_3(1+z_4^2)^2} \Bigl) \Bigl) +\dfrac{2\dot{z_4}\sqrt{1-\dfrac{(2z_4k_1-(1-z_4^2)k_2)^2}{(1+z_4^2)^2(k_1^2+k_2^2+k_3^2)}}}{(1+z_4^2)\sqrt{1+\dfrac{\left( (1-z_4^2)k_1+2z_4k_2\right)^2}{k_3^2(1+z_4^2)^2}}}
\end{split}
\end{equation}
\begin{equation}
\nonumber
\begin{split}
\tau_\theta&=I_{yy}\Big(\dfrac{1}{2}\Bigl( \Bigl( \dfrac{-2z_4\dot{z_4}k_1+(1-z_4^2)\dot{k_1}+2\dot{z_4}k_2+2z_4\dot{k_2}}{k_3(1+z_4^2)}-\dfrac{\left( (1-z_4^2)k_1+2z_4k_2 \right)\dot{k_3}}{k_3^2(1+z_4^2)}\\
&-\dfrac{2z_4\dot{z_4}\left( (1-z_4^2)k_1+2z_4k_2  \right)}{k_3(1+z_4^2)^2}\Bigl)\Bigl(-\dfrac{2\left(2z_4k_1 -(1-z_4^2)k_2 \right)\left( 2\dot{z_4}k_1+2z_4\dot{k_1}+2z_4\dot{z_4}k_2-(1-z_4^2)\dot{k_2} \right)}{(1+z_4^2)^2(k_1^2+k_2^2+k_3^2)}\\
&+\dfrac{4z_4\dot{z_4}\left(2z_4k_1 -(1-z_4^2)k_2 \right)^2}{(1+z_4^2)^3(k_1^2+k_2^2+k_3^2)}+\dfrac{2k_1\dot{k_1}\left( 2z_4 k_1 -(1-z_4^2)k_2 \right)^2+2 k_2\dot{k_2}
+2 k_3\dot{k_3}}{(1+z_4^2)^2(k_1^2+k_2^2+k_3^2)^2} \Bigl) \Bigl) \Bigl/ \\
&\left( \sqrt{1-\dfrac{(2z_4k_1-(1-z_4^2)k_2)^2}{(1+z_4^2)^2(k_1^2+k_2^2+k_3^2)}} \left( 1+\dfrac{\left( (1-z_4^2)k_1+2z_4k_2\right)^2}{k_3^2(1+z_4^2)^2} \right) \right)+\dfrac{1}{1+\dfrac{\left( (1-z_4^2)k_1+2z_4k_2\right)^2}{k_3^2(1+z_4^2)^2}}\Bigl(\\
&\sqrt{1-\dfrac{(2z_4k_1-(1-z_4^2)k_2)^2}{(1+z_4^2)^2(k_1^2+k_2^2+k_3^2)}}\Bigl( \dfrac{4\left( (1-z_4^2)k_1+2z_4k_2 \right)\dot{k_3}z_4\dot{z_4} -\left( (1-z_4^2)k_1+2z_4k_2 \right)\dot{k_3}^2}{k_3^2(1+z_4^2)^2}\\
&+\dfrac{-2\dot{z_4}^2k_1-2z_4\ddot{z_4}k_1-4z_4 \dot{z_4} \dot{z_1} + (1-z_4^2)\ddot{k_1} +2 \ddot{z_4}k_2+4\dot{z_4}\dot{k_2}+2z_4\ddot{k_2}}{k_3(1+z_4^2)} +\dfrac{8\left( (1-z_4^2)k_1+2z_4k_2 \right)z_4^2\dot{z_4}^2}{k_3(1+z_4^2)^3}\\
&-\dfrac{2\dot{k_3}\Bigl( -2z_4\dot{z_4}k_1+(1-z_4^2)\dot{k_1}+2\dot{z_4}k_2+2z_4\dot{k_2} \Bigl)}{k_3^2(1+z_4^2)} +\dfrac{2\left( (1-z_4^2)k_1+2z_4k_2 \right) \dot{k_3}^2}{k_3^3(1+z_4^2)}\\
&-\dfrac{4\left( -2z_4\dot{z_4}k_1+(1-z_4^2)\dot{k_1}+2\dot{z_4}k_2+2z_4\dot{k_2} \right)z_4\dot{z_4}+2\left( (1-z_4^2)k_1+2z_4k_2  \right) \left( \dot{z_4}^2 + z_4\ddot{z_4}\right)}{k_3(1+z_4^2)^2}\Bigl) \Bigl)\\
&-\dfrac{1}{\left( 1+ \dfrac{\left( 1-z_4^2)k_1+2z_4k_2 \right)^2}{k_3^2(1+z_4^2)^2}\right)^2}\Bigl( \sqrt{1-\dfrac{(2z_4k_1-(1-z_4^2)k_2)^2}{(1+z_4^2)^2(k_1^2+k_2^2+k_3^2)}}\Bigl( -\dfrac{\left((1-z_4^2)k_1+2z_4k_2  \right)\dot{k_3}}{k_3^2(1+z_4^2)}\\
&+\dfrac{-2z_4\dot{z_4}k_1+(1-z_4^2)\dot{k_1}+2\dot{z_4}k_2+2z_4\dot{k_2}}{k_3(1+z_4^2)}-\dfrac{2\left( (1-z_4^2)k_1+2z_4k_2 \right)z_4 \dot{z_4}}{k_3(1+z_4^2)^2} \Bigl)\\
&\Bigl( \dfrac{2\left( (1-z_4^2)k_1+2z_4k_2 \right)\left( -2z_4\dot{z_4}k_1+(1-z_4^2)\dot{k_1}+2\dot{z_4}k_2+2z_4\dot{k_2} \right)}{k_3^2(1+z_4^2)^2} - \dfrac{2\left(1-z_4^2)k_1+2z_4k_2 \right)^2\dot{k_3}}{k_3^3(1+z_4^2)^2}\\
&-\dfrac{4\left((1-z_4^2)k_1+2z_4k_2 \right)^2z_4\dot{z_4}}{k_3^2(1+z_4^2)^3}\Bigl) \Bigl) + \dfrac{2\left( 2\dot{z_4}k_1+2z_4\dot{k_1}+2z_4\dot{z_4}k_2-(1-z_4^2)\dot{k_2} \right)\dot{z_4}}{(1+z_4^2)^2\sqrt{k_1^2+k_2^2+k_3^2}\sqrt{1+\dfrac{\Bigl((1-z_4^2)k_1+2z_4k_2 \Bigl)^2}{k_3^2(1+z_4^2)^2}}}\\
&-\dfrac{8\left( 2z_4k_1-(1-z_4^2)k_2 \right)\dot{z_4}^2z_4}{(1+z_4^2)^3\sqrt{k_1^2+k_2^2+k_3^2}\sqrt{1+\dfrac{\Bigl((1-z_4^2)k_1+2z_4k_2 \Bigl)^2}{k_3^2(1+z_4^2)^2}}}
\end{split}
\end{equation}
\begin{equation}
\begin{split}
&-\dfrac{\dot{z_4}\left( 2z_4k_1-(1-z_4^2)k_2 \right)\left(2k_1\dot{k_1}+2k_2\dot{k_2}+2k_3\dot{k_3} \right)}{(1+z_4^2)^2(k_1^2+k_2^2+k_3^2)^{3/2}\sqrt{1+\dfrac{\Bigl((1-z_4^2)k_1+2z_4k_2 \Bigl)^2}{k_3^2(1+z_4^2)^2}}}\\
&-\dfrac{1}{(1+z_4^2)^2(k_1^2+k_2^2+k_3^2)^{3/2}\left(1+\dfrac{\Bigl((1-z_4^2)k_1+2z_4k_2 \Bigl)^2}{k_3^2(1+z_4^2)^2}\right)^{3/2}}\Bigl( \left(  2z_4k_1-(1-z_4^2)k_2 \right) \dot{z_4}\\
&\Bigl( \dfrac{2\left((1-z_4^2)k_1+2z_4k_2 \right) \left( -2z_4\dot{z_4}k_1+(1-z_4^2)\dot{k_1}+2\dot{z_4}k_2+2z_4\dot{k_2} \right)}{k_3^2(1+z_4^2)^2}-\dfrac{2\left( (1-z_4^2)k_1+2z_4k_2 \right)^2\dot{k_3}}{k_3^2(1+z_4^2)^2}\\
&-\dfrac{4\left( (1-z_4^2)k_1+2z_4k_2 \right)^2 z_4 \dot{z_4}}{k_3^2(1+z_4^2)^3}\Bigl) \Bigl) + \dfrac{2\left( 2z_4k_1-(1-z_4^2)k_2  \right)\ddot{z_4}}{(1+z_4^2)^2(k_1^2+k_2^2+k_3^2)^{3/2}\sqrt{1+\dfrac{\Bigl((1-z_4^2)k_1+2z_4k_2 \Bigl)^2}{k_3^2(1+z_4^2)^2}}}\Bigl)\\
&-(I_{zz}-I_{xx})\Bigl(\dfrac{1}{\sqrt{1-\dfrac{(2z_4k_1-(1-z_4^2)k_2)^2}{(1+z_4^2)^2(k_1^2+k_2^2+k_3^2)}}}\Bigl( \dfrac{2\dot{z_4}k_1+2z_4\dot{k_1}+2z_4\dot{z_4}k_2-(1-z_4^2)\dot{k_2}}{(1+z_4^2)\sqrt{k_1^2+k_2^2+k_3^2}}\\
&-\dfrac{2\left( 2z_4k_1-(1-z_4^2)k_2 \right)z_4\dot{z_4}}{(1+z_4^2)^2\sqrt{k_1^2+k_2^2+k_3^2}}-\dfrac{\left(2z_4k_1-(1-z_4^2)k_2 \right) \left( k_1\dot{k_1}+k_2\dot{k_2}+k_3\dot{k_3} \right)}{(1+z_4^2)(k_1^2+k_2^2+k_3^2)^{3/2}}\Bigl) \\
&-\dfrac{2\left( (1-z_4^2)k_1+2z_4k_2  \right)\dot{z_4}}{k_3(1+z_4^2)^2\sqrt{1+\dfrac{\Bigl((1-z_4^2)k_1+2z_4k_2 \Bigl)^2}{k_3^2(1+z_4^2)^2}}}\Bigl) \Bigl(-\dfrac{1}{(1+z_4^2)^2\sqrt{k_1^2+k_2^2+k_3^2}\left( 1+\dfrac{\Bigl((1-z_4^2)k_1+2z_4k_2 \Bigl)^2}{k_3^2(1+z_4^2)^2} \right)}\Bigl( \\
&\left( 2z_4k_1-(1-z_4^2)k_2 \right)\Bigl( \dfrac{-2z_4\dot{z_4}k_1+(1-z_4^2)\dot{k_1}+2\dot{z_4}k_2+2z_4\dot{k_2}}{k_3(1+z_4^2)}-\dfrac{((1-z_4^2)k_1+2z_4k_2)\dot{k_3}}{k_3^2(1+z_4^2)}\\
&-\dfrac{2((1-z_4^2)k_1+2z_4k_2)z_4\dot{z_4}}{k_3(1+z_4^2)^2}\Bigl) \Bigl)+\dfrac{2\dot{z_4}\sqrt{1-\dfrac{(2z_4k_1-(1-z_4^2)k_2)^2}{(1+z_4^2)^2(k_1^2+k_2^2+k_3^2)}}}{(1+z_4^2)\sqrt{1+\dfrac{\left( (1-z_4^2)k_1+2z_4k_2\right)^2}{k_3^2(1+z_4^2)^2}}}\Bigl)
\end{split}
\end{equation}
\begin{equation}
\nonumber
\begin{split}
\tau_\psi&=I_{zz}\Bigl(-\dfrac{1}{(1+z_4^2)^2\sqrt{k_1^2+k_2^2+k_3^2}\left(1+\dfrac{\Bigl((1-z_4^2)k_1+2z_4k_2 \Bigl)^2}{k_3^2(1+z_4^2)^2}\right)}\Bigl( \Bigl( 2\dot{z_4}k_1+2z_4\dot{k_1}+2z_4\dot{z_4}k_2\\
&-(1-z_4^2)\dot{k_2} \Bigl)\Bigl(\dfrac{-2z_4\dot{z_4}k_1+(1-z_4^2)\dot{k_1}+2\dot{z_4}k_2+2z_4\dot{k_2}}{k_3(1+z_4^2)}-\dfrac{((1-z_4^2)k_1+2z_4k_2 )\dot{k_3}}{k_3^2(1+z_4^2)}\\
&-\dfrac{2((1-z_4^2)k_1+2z_4k_2)z_4\dot{z_4}}{k_3(1+z_4^2)^2}\Bigl)z_4\dot{z_4}\Bigl)+\dfrac{1}{2}\dfrac{1}{(1+z_4^2)^2(k_1^2+k_2^2+k_3^2)^{3/2}\left(1+\dfrac{\Bigl((1-z_4^2)k_1+2z_4k_2 \Bigl)^2}{k_3^2(1+z_4^2)^2}\right)}\\
&\Bigl(\left(2z_4k_1-(1-z_4^2)k_2\right)\Bigl(\dfrac{-2z_4\dot{z_4}k_1+(1-z_4^2)\dot{k_1}+2\dot{z_4}k_2+2z_4\dot{k_2}}{k_3(1+z_4^2)}-\dfrac{((1-z_4^2)k_1+2z_4k_2)\dot{k_3}}{k_3^2(1+z_4^2)}\\
&-\dfrac{2\left( (1-z_4^2)k_1+2z_4k_2 \right)z_4\dot{z_4}}{k_3(1+z_4^2)^2}\Bigl)\left(2k_1\dot{k_1}+2k_2\dot{k_2}+2k_3\dot{k_3} \right)\Bigl)\\
&-\dfrac{1}{(1+z_4^2)\sqrt{k_1^2+k_2^2+k_3^2}\left(1+\dfrac{\Bigl((1-z_4^2)k_1+2z_4k_2 \Bigl)^2}{k_3^2(1+z_4^2)^2}\right)}\Bigl(\left( 2z_4k_1-(1-z_4^2)k_2 \right)\Bigl(\dfrac{1}{k_3(1+z_4^2}\Bigl(\\
&-2\dot{z_4}^2k_1-2z_4\ddot{z_4}k_1-4z_4\dot{z_4}\dot{k_1}+(1-z_4^2)\ddot{k_1}+2\ddot{z_4}k_2+4\dot{z_4}\dot{k_2}+2z_4\ddot{k_2}\Bigl)\\
&-\dfrac{2\left(-2z_4\dot{z_4}k_1+(1-z_4^2)\dot{k_1}+2\dot{z_4}k_2+2z_4\dot{k_2} \right)\dot{k_3}}{k_3^2(1+z_4^2)}-\dfrac{4\left(-2z_4\dot{z_4}k_1+(1-z_4^2)\dot{k_1}+2\dot{z_4}k_2+2z_4\dot{k_2} \right)z_4 \dot{z_4}}{k_3(1+z_4^2)^2}\\
&+\dfrac{2\left((1-z_4^2)k_1+2z_4k_2 \right)\dot{k_3}^2}{k_3^3(1+z_4^2)}+\dfrac{4\left(  (1-z_4^2)k_1+2z_4k_2 \right)\dot{k_3}z_4\dot{z_4}}{k_3^2(1+z_4^2)^2}-\dfrac{\left(  (1-z_4^2)k_1+2z_4k_2 \right)\ddot{k_3}}{k_3^2(1+z_4^2)}\\
&+\dfrac{8\left(  (1-z_4^2)k_1+2z_4k_2 \right)z_4^2\dot{z_4}^2}{k_3(1+z_4^2)^3}-\dfrac{2\left(  (1-z_4^2)k_1+2z_4k_2 \right)(\dot{z_4}^2+z_4\ddot{z_4})}{k_3(1+z_4^2)^2}\Bigl)\Bigl)\\
&+\dfrac{1}{(1+z_4^2)\sqrt{k_1^2+k_2^2+k_3^2}\left(1+\dfrac{\Bigl((1-z_4^2)k_1+2z_4k_2 \Bigl)^2}{k_3^2(1+z_4^2)^2}\right)^2}\Bigl(\left( 2z_4k_1-(1-z_4^2)k_2\right)\\
&\Bigl( \dfrac{-2z_4\dot{z_4}k_1+(1-z_4^2)\dot{k_1}+2\dot{z_4}k_2+2z_4\dot{k_2}}{k_3(1+z_4^2)}-\dfrac{\left(  (1-z_4^2)k_1+2z_4k_2 \right)\dot{k_3}}{k_3^2(1+z_4^2)}-\dfrac{2\left(  (1-z_4^2)k_1+2z_4k_2 \right)z_4\dot{z_4}}{k_3(1+z_4^2)^2}\Bigl)\\
&\Bigl(\dfrac{2\left(  (1-z_4^2)k_1+2z_4k_2 \right)\left( -2z_4\dot{z_4}k_1+(1-z_4^2)\dot{k_1}+2\dot{z_4}k_2+2z_4\dot{k_2}\right)}{k_3^2(1+z_4^2)^2}-\dfrac{2\left(  (1-z_4^2)k_1+2z_4k_2 \right)^2\dot{k_3}}{k_3^3(1+z_4^2)^2}\\
&-\dfrac{4\left(  (1-z_4^2)k_1+2z_4k_2 \right)^2z+4\dot{z_4}}{k_3^2(1+z_4^2)^3}\Bigl) \Bigl)+\Bigl(\dot{z_4}\Bigl(-\dfrac{2(2z_4k_1-(1-z_4^2)k_2)(2\dot{z_4}k_1+2z_4\dot{k_1}+2z_4\dot{z_4}k_2-(1-z_4^2)\dot{k_2})}{(1+z_4^2)^2(k_1^2+k_2^2+k_3^2)}
\end{split}
\end{equation}
\begin{equation}
\begin{split}
&+\dfrac{4(2z_4k_1-(1-z_4^2)k_2)^2z_4\dot{z_4}}{(1+z_4^2)^3(k_1^2+k_2^2+k_3^2)}+\dfrac{(2z_4k_1-(1-z_4^2)k_2)^2(2k_1\dot{k_1}+2k_2\dot{k_2}+2k_3\dot{k_3})}{(1+z_4^2)^2(k_1^2+k_2^2+k_3^2)^2}\Bigl) \Bigl) \Bigl/ \\
&\left(\sqrt{1-\dfrac{(2z_4k_1-(1-z_4^2)k_2)^2}{(1+z_4^2)^2(k_1^2+k_2^2+k_3^2)}} \sqrt{1+\dfrac{\left( (1-z_4^2)k_1+2z_4k_2\right)^2}{k_3^2(1+z_4^2)^2}} (1+z_4^2) \right)\\
&-\dfrac{1}{\left(1+\dfrac{\left( (1-z_4^2)k_1+2z_4k_2\right)^2}{k_3^2(1+z_4^2)^2}\right)^{3/2} (1+z_4^2)} \Bigl( \dot{z_4}\sqrt{1-\dfrac{(2z_4k_1-(1-z_4^2)k_2)^2}{(1+z_4^2)^2(k_1^2+k_2^2+k_3^2)}}\\
&\Bigl(\dfrac{2\left( (1-z_4^2)k_1+2z_4k_2\right)\left( -2z_4\dot{z_4}k_1+(1-z_4^2)\dot{k_1}+2\dot{z_4}k_2+2z_4\dot{k_2}\right)}{k_3^2(1+z_4^2)^2}\Bigl)\\
&-\dfrac{2\left( (1-z_4^2)k_1+2z_4k_2\right)^2\dot{k_3}}{k_3^3(1+z_4^2)^2}-\dfrac{4\left( (1-z_4^2)k_1+2z_4k_2\right)z_4\dot{z_4}}{k_3^2(1+z_4^2)^3} \Bigl) \Bigl)+\dfrac{2\ddot{z_4}\sqrt{1-\dfrac{(2z_4k_1-(1-z_4^2)k_2)^2}{(1+z_4^2)^2(k_1^2+k_2^2+k_3^2)}}}{(1+z_4^2)\sqrt{1+\dfrac{\left( (1-z_4^2)k_1+2z_4k_2\right)^2}{k_3^2(1+z_4^2)^2}}}\\
&-\dfrac{4\dot{z_4}^2z_4\sqrt{1-\dfrac{(2z_4k_1-(1-z_4^2)k_2)^2}{(1+z_4^2)^2(k_1^2+k_2^2+k_3^2)}}}{(1+z_4^2)^2\sqrt{1+\dfrac{\left( (1-z_4^2)k_1+2z_4k_2\right)^2}{k_3^2(1+z_4^2)^2}}}\Bigl)
\end{split}
\end{equation}
\end{document}